\title{Path Planning Problems with Side Observations---When Colonels Play Hide-and-Seek}
\author{Dong Quan Vu,\textsuperscript{\rm 1}
{Patrick Loiseau},\textsuperscript{\rm 2}
{Alonso Silva},\textsuperscript{\rm 3}
{Long Tran-Thanh}\textsuperscript{\rm 4}\\
\textsuperscript{\rm 1}Nokia Bell Labs France, AAAID Department, 
\textsuperscript{\rm 2}Univ. Grenoble Alpes,
Inria, CNRS, Grenoble INP, LIG \& MPI-SWS, \\
\textsuperscript{\rm 3}Safran Tech, Signal and Information Technologies, 
 \textsuperscript{\rm 4}Univ. of Southampton,  School of Electronics and Computer Science\\
 quan\_dong.vu@nokia.com, patrick.loiseau@inria.fr, alonso.silva-allende@safrangroup.com, l.tran-thanh@soton.ac.uk
}
\DeclarePairedDelimiter{\ceil}{\lceil}{\rceil}
\newtheorem{theorem}{Theorem}[section]
\newtheorem{proposition}[theorem]{Proposition}
\newtheorem{corollary}[theorem]{Corollary}
\newtheorem{lemma}[theorem]{Lemma}
\newtheorem{definition}[theorem]{Definition}
\newcommand\mO{\mathcal{O}}
\newcommand\Loss{\boldsymbol{\mathcal{\ell}}}
\newcommand\hLoss{\hat{\boldsymbol{\mathcal{\ell}}}}
\newcommand\pa{\boldsymbol{p}}
\newcommand\pset{\mathcal{P}}
\newcommand\patil{\tilde{\boldsymbol{p}}}
\newcommand\vset{\mathcal{V}}
\newcommand\eset{\mathcal{E}}	
\newcommand\ob{\mathbb{O}}
\def\OCO/{\textsc{OComb}}
\def\Algo/{\textsc{Exp3-OE}}
\begin{document}

\maketitle

\begin{abstract}
Resource allocation games such as the famous Colonel Blotto (CB) and Hide-and-Seek (HS) games are often used to model a large variety of practical problems, but only in their one-shot versions. Indeed, due to their extremely large strategy space, it remains an open question how one can efficiently learn in these games. In this work, we show that the online CB and HS games can be cast as path planning problems with side-observations (SOPPP): at each stage, a learner chooses a path on a directed acyclic graph and suffers the sum of losses that are adversarially assigned to the corresponding edges; and she then receives semi-bandit feedback with side-observations (i.e., she observes the losses on the chosen edges plus some others). We propose a novel algorithm, \Algo/, the first-of-its-kind with guaranteed efficient running time for SOPPP without requiring any auxiliary oracle. We provide an expected-regret bound of \Algo/ in SOPPP matching the order of the best benchmark in the literature. Moreover, we introduce additional assumptions on the observability model under which we can further improve the regret bounds of \Algo/. We illustrate the benefit of using \Algo/ in SOPPP by applying it to the online CB and~HS~games.  
\end{abstract}

\section{Introduction}
\label{intro}
Resource allocation games have been studied profoundly in the literature and showed to be very useful to model many practical situations, including online decision problems, see e.g. \cite{blocki2013audit,bower2005resource,korzhyk2010complexity,zhang2009multi}. In particular, two of the most renowned are the Colonel Blotto game (henceforth, CB game) and the Hide-and-Seek game (henceforth, HS game). In the (one-shot) \emph{CB game}, two players, each with a fixed amount of budget, simultaneously allocate their indivisible resources (often referred to as troops) on $n \in \mathbb{N}$ battlefields, each player's payoff is the aggregate of the values of battlefields where she has a higher allocation. The scope of applications of the CB games includes a variety of problems; for instance, in security where resources correspond to security forces (e.g., \cite{chia2012,schwartz2014}), in politics where budget are distributed to attract voters (e.g., \cite{kovenock2012,roberson2006}), and in advertising for distributing the ads' broadcasting time (e.g., \cite{masucci2014,masucci2015}). On the other hand, in the (one-shot) \emph{HS game}, a seeker chooses $n$ among $k$ locations ($n \le k$) to search for a hider, who chooses the probability of hiding in each location. The seeker's payoff is the summation of the probability that the hider hides in the chosen locations and the hider's payoff is the probability that she successfully escapes the seeker's pursuit. Several variants of the HS games are used to model surveillance situations \cite{bhattacharya2014surveillance}, anti-jamming problems \cite{navda2007using,wang16}, vehicles control \cite{vidal2002probabilistic}, etc. 

Both the CB and the HS games have a long-standing history (originated by~\cite{borel1921} and~\cite{vonneumann53}, respectively); however, the results achieved so-far in these games are mostly limited to their one-shot and full-information version (see e.g., \cite{Behnezhad17a,grosswagner,roberson2006,schwartz2014,Vu18a} for CB games and \cite{hespanha2000probabilistic,yavin1987pursuit} for HS games). On the contrary, in most of the applications (e.g.,  telecommunications, web security, advertising), a more natural setting is to consider the case where the game is played repeatedly and players have access only to incomplete information at each stage. In this setting, players are often required to sequentially learn the game on-the-fly and adjust the trade-off between exploiting known information and exploring to gain new information. Thus, this work focuses on the following sequential learning problems:

\noindent $(i)$ The \emph{online CB game}: fix $k,n \in \mathbb{N}$ ($k,n \ge 1$); at each stage, a learner who has the budget $k$ plays a CB game against some adversaries across $n$ battlefields; at the end of the stage, she receives limited feedback that is the gain (loss) she obtains from each battlefield (but not the adversaries' strategies). The battlefields' values can change over time and they are unknown to the learner before making the decision at each stage. This setting is generic and covers many applications of the CB game. For instance, in radio resource allocation problem (in a cognitive radio network), a solution that balances between efficiency and fairness is to provide the users fictional budgets (the same budget at each stage) and let them bid across $n$ spectrum carriers simultaneously to compete for obtaining as many bandwidth portions as possible, the highest bidder to each carrier wins the corresponding bandwidth (see e.g., \cite{chien2019stochastic}). At the end of each stage, each user observes her own data rate (the gain/loss) achieved via each carrier (corresponding to battlefields' values) but does not know other users' bids. Note that the actual data rate can be noisy and change over time. Moreover, users can enter and leave the system so no stochastic assumption shall be made for the~adversaries'~decisions.

\noindent $(ii)$ The \emph{online HS game}: fix $k,n \in \mathbb{N}$ ($k,n \ge 1$ and $n \le k$); at each stage, the learner is a seeker who plays the same HS game (with $k$ and $n$) against an adversary; at the end of the stage, the seeker only observes the gains/losses she suffers from the locations she chose. This setting is practical and one of the motivational examples is the spectrum sensing problem in opportunistic spectrum access context (see e.g., \cite{yucek2009survey}). At each stage, a secondary user (the learner) chooses to send the sensing signal to at most $n$ among $k$ channels (due to energy constraints, she cannot sense all channels) with the objective of sensing the channels with the availability as high as possible. The leaner can only measure the reliably (the gain/loss) of the channels that she sensed. Note that the channels' availability depend on primary users' decisions that is non-stochastic. 

A formal definition of these problems is given in Section~\ref{sec:games}; hereinafter, we reuse the term CB game and HS game to refer to this sequential learning version of the games. The main challenge here is that the strategy space is exponential in the natural parameters (e.g., number of troops and battlefields in the CB game, number of locations in the HS game); hence how to efficiently learn in these games is an open question. 

Our \textbf{first contribution} towards solving this open question is to show that the CB and HS games can be cast as a \emph{Path Planning Problem} (henceforth, PPP), one of the most well-studied instances of the \emph{Online Combinatorial Optimization} framework (henceforth, \OCO/; see \cite{chen2013combinatorial} for a survey). In PPPs, given a directed graph a source and a destination, at each stage, a learner chooses a path from the source to the destination; simultaneously, a loss is adversarially chosen for each edge; then, the learner suffers the aggregate of edges' losses belonging to her chosen path. The learner's goal is to minimize regret. The information that the learner receives in the CB and HS games as described above straightforwardly corresponds to the so-called \emph{semi-bandit} feedback setting of PPPs, i.e., at the end of each stage, the learner observes the edges' losses belonging to her chosen path (see Section~\ref{sec:games} for more details). However, the specific structure of the considered games also allows the learner to deduce (without any extra cost) from the semi-bandit feedback the losses of some of the other edges that may not belong to the chosen path; these are called \emph{side-observations}. Henceforth, we will use the term SOPPP to refer to this {PPP under semi-bandit feedback with~side-observations}. 

SOPPP is a special case of \OCO/ with side-observations (henceforth, S\OCO/) studied by \cite{kocak14} and, following their approach, we will use \emph{observation graphs}\footnote{The observation graphs, proposed by \cite{kocak14} and used here for SOPPP, extend the side-observations model for multi-armed bandits problems studied by \cite{alon15,alon13,mannor11}. Indeed, they capture side-observations between edges whereas the side-observations model considered by \cite{alon15,alon13,mannor11} is between actions, i.e., paths in PPPs.} (defined in Section~\ref{sec:SOPPPFor}) to capture the learner's observability. 
\cite{kocak14} focuses on the class of Follow-the-Perturbed-Leader (FPL) algorithms (originated from \cite{kalai2005efficient}) and proposes an algorithm named {FPL-IX} for S\OCO/, which could be applied directly to SOPPP. However, this faces two main problems: ($i$) the efficiency of FPL-IX is only guaranteed with high-probability (as it depends on the geometric sampling technique) and it is still super-linear in terms of the time horizon, thus there is still room for improvements; ($ii$) FPL-IX requires that there exists an efficient oracle that solves an optimization problem at each stage. Both of these issues are incompatible with our goal of learning in the CB and HS~games: although the probability that FPL-IX fails to terminate is small, this could lead to issues in implementing it in practice where the
learner is obliged to quickly give a decision in each stage; it is unclear which oracle should be used in applying FPL-IX to the CB and HS~games. 

In this paper, we focus instead on another prominent class of \OCO/ algorithms, called \textsc{Exp3} \cite{auer02b,Freund1997}. One of the key open questions in this field is how to design a variant of \textsc{Exp3} with efficient running time and good regret guarantees for \OCO/ problems in each feedback setting (see, e.g., \cite{cesa2012}). Then, our \textbf{second contribution} is to propose an \textsc{Exp3}-type algorithm for SOPPPs that solves both of the aforementioned issues of FPL-IX and provides good regret guarantees; i.e., we give an affirmative answer to an important subset of the above-mentioned open problem. In more details, this contribution is three-fold: $(i)$ We propose a \emph{novel algorithm, \Algo/,} that is applicable to any instance of SOPPP. Importantly, \Algo/ is always guaranteed to run efficiently (i.e., in polynomial time in terms of the number of edges of the graph in SOPPP) without the need of any auxiliary oracle; $(ii)$ We prove that \Algo/ guarantees an upper-bound on the expected regret matching in order with the best benchmark in the literature (the FPL-IX algorithm). We also prove further improvements under additional assumptions on the observation graphs that have been so-far ignored in the literature; $(iii)$ We demonstrate the benefit of using the \Algo/ algorithm in the CB and HS games.

Note importantly that the SOPPP model (and the \Algo/ algorithm) can be applied into many problems beyond the CB and HS games, e.g., auctions, recommendation systems. To highlight this and for the sake of conciseness, we first study the generic model of SOPPP in Section~\ref{sec:SOPPPFor} and present our second contribution in Section~\ref{sec:Exp3-OE}, i.e., the \Algo/ algorithm in SOPPPs; we delay the formal definition of the CB and HS games, together with the analysis on running \Algo/ in these games (i.e., our first contribution) to Section~\ref{sec:games}. 

Throughout the paper, we use bold symbols to denote vectors, e.g., $\boldsymbol{z} \in \mathbb{R}^n$, and $\boldsymbol{z}(i)$ to denote the $i$-th element. 
For any $m\ge 1$, the set $\{1,2, \ldots, m\}$ is denoted by $[m]$ and
the indicator function of a set $A$ is denoted by $\mathbb{I}_A$. For graphs, we write either $e\! \in \!\pa$ or $\pa\! \ni \! e$ to refer that an edge $e$ belongs to a path~$\pa$. Finally, we use $\tilde{\mO}$ as a version of the big-O asymptotic notation that ignores the logarithmic~terms.

%
\section{Path Planning Problems with Side-Observations (SOPPP) Formulation}
\label{sec:SOPPPFor}
As discussed in Section~\ref{intro}, motivated by the CB and HS games, we propose the path planning problem with semi-bandit and side-observations feedback (SOPPP).

\textbf{SOPPP model.} Consider a directed acyclic graph (henceforth, DAG), denoted by $G$, whose set of vertices and set of edges are respectively denoted by $\vset$ and $\eset$. Let $V:=|\vset| \ge 2$ and $E:= |\eset| \ge 1$; there are two special vertices, a source and a destination, that are respectively called $s$ and $d$. We denote by $\pset$ the set of all \textit{paths} starting from $s$ and ending at $d$; let us define $P:= |\pset|$. Each path $\pa \in \pset$ corresponds to a vector in $\{0,1\}^{E}$ (thus, $\pset \subset \{0,1\}^E$) where $\pa(e) = 1$ if and only if edge $e\in \eset$ belongs to~$\pa$. Let $n$ be the length of the longest path in $\pset$, that is \mbox{$\| \pa \| _{1} \le n, \forall \pa \in \pset$}. Given a time horizon $T \in \mathbb{N}$, at each (discrete) stage $t \in [T]$, a \emph{learner} chooses a path \mbox{$\patil_t \in \pset$}. Then, a \emph{loss vector} \mbox{$\Loss_t \in [0,1]^E$} is secretly and adversarially chosen. Each element \mbox{$\Loss_t(e)$} corresponds to the scalar loss embedded on the edge $e \in \eset$. Note that we consider the \emph{non-oblivious adversary}, i.e., $\Loss_t$ can be an arbitrary function of the learner's past actions $\patil_s, \forall s \in [t-1]$, but not~$\patil_t$.\footnote{This setting is considered by most of the works in the non-stochastic/adversarial bandits literature, e.g.,~\cite{alon13,cesa2012}.} The learner's incurred loss is \mbox{$L_t(\patil_t)=(\patil_t)^{\top} \Loss_t =  \sum_{e \in \patil_t}\nolimits{\Loss_t(e)}$}, i.e., the sum of the losses from the edges belonging to~$\patil_t$. The learner's feedback at stage $t$ after choosing $\patil_t$ is presented as follows. First, she receives a \textit{semi-bandit} feedback, that is, she observes the edges' losses $\Loss_{t}(e)$, for any $e$ belonging to the chosen path~$\patil_t$. Additionally, each edge \mbox{$e \in \patil_t$} may reveal the losses on several other edges. To represent these \textit{side-observations} at time~$t$, we consider a graph, denoted $G_t^O$, containing $E$ vertices. Each vertex $v_e$ of $G_t^O$ corresponds to an edge $e \in \eset$ of the graph~$G$. There exists a directed edge from a vertex $v_e$ to a vertex $v_{e^{\prime}}$ in $G_t^O$ if, by observing the edge loss $\Loss_{t}(e)$, the learner can also deduce the edge loss~$\Loss_t({e^{\prime}})$; we also denote this by~\mbox{$e \rightarrow {e^{\prime}}$} and say that the edge $e$ reveals the edge $e^{\prime}$. The objective of the learner is to minimize the cumulative \textit{expected regret}, defined as \mbox{$R_{T} := \mathbb{E}\left[ {\sum \nolimits_{t \in [T]} L\left( {\patil}_t\right)} \right]- \min\limits_{{\pa^*} \in \pset} {\sum \nolimits_{t \in [T]}{L}\left( {\pa^*}\right)} $}.

Hereinafter, in places where there is no ambiguity, we use the term \emph{path} to refer to a path in $\pset$ and the term \emph{observation graphs} to refer to~$G_t^O$. In general, these observation graphs can depend on the decisions of both the learner and the adversary. On the other hand, all vertices in $G_t^O$ always have self-loops. In the case where none among $G_t^O, t\in [T]$ contains any other edge than these self-loops, no side-observation is allowed and the problem is reduced to the classical semi-bandit setting. If all $G_t^O,t \in [T]$ are complete graphs, SOPPP corresponds to the full-information PPPs. In this work, we focus on considering the \emph{uninformed setting}, i.e., the learner observes $G_t^O$ only after making a decision at time~$t$. On the other hand, we introduce two new~notations:
\begin{align*}
    &  \ob_t(e) \!:=\!  \left\{\pa \in \pset: \! \exists e^{\prime} \! \in \! \pa, e^{\prime} \! \rightarrow  e \right\}, \forall e \in \! \eset, \\
     & \ob_t(\pa) \!:= \! \left\{e \in\! \eset:\! \exists e^{\prime} \in \! \pa, e^{\prime}\! \rightarrow e \right\}, \forall \pa \in \!\pset. \nonumber 
\end{align*}
Intuitively, $\ob_t(e)$ is the set of all paths that, if chosen, reveal the loss on the edge $e$ and $\ob_t(\pa)$ is the set of all edges whose losses are revealed if the path $\pa$ is chosen. Trivially, $\pa \in \ob(e) \Leftrightarrow e \in \ob(\pa)$. Moreover, due to the semi-bandit feedback, if $\pa^{*} \ni e^{*}$, then $\pa^{*} \in \ob_t(e^{*})$ and $e^{*} \in \ob_t(\pa^{*})$. Apart from the results for general observation graphs, in this work, we additionally present several results under two particular assumptions, satisfied by some instances in practice (e.g., the CB and HS games), that provide more refined regret~bounds compared to cases that were considered by \cite{kocak14}: 
\begin{trivlist}
    \item[$(i)$]\textit{symmetric} observation graphs where for each edge from $v_e$ to $v_{e^{\prime}}$, there also exists an edge from $v_{e^{\prime}}$ to $v_e$ (i.e., if $e \rightarrow e^{\prime}$ then $e^{\prime} \rightarrow e $); i.e., $G^O_t$ is an undirected graph; 
    \item[$(ii)$]observation graphs under the following \emph{assumption~$(A0)$} that requires that if two edges belong to a path in $G$, then they cannot simultaneously reveal the loss of another~edge: 
    \noindent \textbf{Assumption $\boldsymbol{(A0)}$:} \emph{For any $e \!\in\! \eset$, if $e^{\prime} \! \rightarrow \! e$ and $e^{\prime \prime} \! \rightarrow \! e$, then $\nexists \pa \in \pset: \pa \ni e^{\prime}, \pa \ni e^{\prime \prime}$}.
\end{trivlist}
%

%

%

%
\section{\Algo/ - An Efficient Algorithm for the~SOPPP}
\label{sec:Exp3-OE}
%
In this section, we present a new algorithm for SOPPP, called \Algo/ (OE stands for Observable Edges), whose pseudo-code is given by Algorithm~\ref{OEdgeAl}. The guarantees on the expected regret of \Algo/ in SOPPP is analyzed in Section~\ref{sec:OEPerform}. Moreover, $\Algo/$ always runs efficiently in polynomial time in terms of the number of edges of $G$; this is discussed in Section~\ref{sec:Effi}.

\begin{algorithm}[H]
   \centering
   \captionof{algorithm}{\Algo/~Algorithm for SOPPP.}
   \label{OEdgeAl}
        \begin{algorithmic}[1]
           \STATE {\bfseries Input:} $T$, $\eta, \beta >0$, graph $G$.
           \STATE Initialize $w_1(e):= 1$, $\forall e \in \eset $.
           \FOR{$t=1$ {\bfseries to} $T$}
           \STATE Loss vector $\Loss_t$ is chosen adversarially (unobserved).
           \STATE Use WP Algorithm (see Appendix~\ref{sec:AppenWeiPush}) to sample a path $\patil_t$ according to \mbox{$x_t(\patil_t)$} (defined in \eqref{pathwei}).
            \STATE Suffer the loss $L_t(\patil_t)= \sum \nolimits_{e \in \patil_t} {\Loss_{t}(e)}$.
           \STATE Observation graph~$G_t^O$ is generated and $\Loss_{t}(e)$, \mbox{$\forall e \in \ob_t{(\patil_t)}$} are observed.
           \STATE \mbox{$\hLoss_{t}(e) \!:=\! {\Loss_{t}(e)}\mathbb{I}_{\{e \in \ob_t(\patil_t) \}} \big/ {(q_t(e) + \beta)} $}, \mbox{$\forall e \! \in \! \eset$}, where \mbox{$q_t(e)\!:=\! \sum \nolimits_{\pa \in \ob_t(e)}{x_t(\pa)}$} is computed by~Algorithm~\ref{Algo:EstLoss} (see Section~\ref{sec:Effi}).
           \STATE Update weights $w_{t+1}(e):= w_{t}(e) \cdot \exp(-\eta \hLoss_t(e))$.
           \ENDFOR
     \end{algorithmic}
\end{algorithm}
As an \textsc{Exp3}-type algorithm, \Algo/ relies on the average weights sampling where at stage $t$ we update the weight $w_t(e)$ on each edge $e$ by the exponential rule (line $9$). For each path $\pa$, we denote the path weight \mbox{$w_t(\pa): = \prod \nolimits_{e \in \pa}{w_t(e)}$} and define the following~terms:
\begin{equation}
    x_t(\pa) := \frac{{\prod \limits_{e \in \pa}{w_t(e)}}}  {\sum \limits_{\pa^{\prime} \in \pset} {\prod \limits_{e^{\prime} \in \pa^{\prime}}{w_t(e^{\prime} )} }} 
    =  \frac{{w_t(\pa)}} {\sum \limits_{\pa^{\prime} \in \pset}{w_t(\pa^{\prime})}}, \forall \pa \in \pset. \label{pathwei}
\end{equation}
Line 5 of \Algo/ involves a sub-algorithm, called the WPS algorithm, that samples a path $ \pa\in\pset$ with probability $x_t(\pa)$ (the sampled path is then denoted by $\patil_t$) from any input $\{w_t(e), e\in \eset\}$ at each stage $t$. This algorithm is based on a classical technique called weight pushing (see e.g., \cite{takimoto2003,gyorgy2007}). We discuss further details and present an explicit formulation of the WPS algorithm in~Appendix~\ref{sec:AppenWeiPush}).

Compared to other instances of the \textsc{Exp3}-type algorithms, \Algo/ has two major differences. First, at each stage $t$, the loss of each edge $e$ is estimated by $\hLoss_t(e)$ (line $8$) based on the term $q_t(e)$ and a parameter $\beta$. Intuitively, $q_t(e)$ is the probability that the loss on the edge $e$ is revealed from playing the chosen path at $t$. Second, the implicit exploration parameter $\beta$ added to the denominator allows us to ``pretend to explore" in \Algo/ without knowing the observation graph $G_t^O$ before making the decision at stage $t$ (the uninformed~setting). Unlike the standard \textsc{Exp3}, the loss estimator used in \Algo/ is \emph{biased}, i.e., for any $e \in \eset$,
%
%
\begin{align}
     \mathbb{E}_t\left[ \hLoss_t(e) \right] \!
     & = \!\sum \limits_{\patil \in \pset} {x_t(\patil) {\frac{\Loss_{t}(e)}{q_t(e)\!+\!\!\beta} \mathbb{I}_{\{e \in \ob_t(\patil) \}}}}\!  \nonumber \\
            & =\!  \sum \limits_{\patil \in \ob_t(e)} {x_t(\patil) {\frac{\Loss_{t}(e)}{\sum \limits_{\pa \in \ob_t(e)}{x_t(\pa)}\!+\! \beta}}} \! \le \! \Loss_t(e). \label{optimis}
\end{align}
Here, $\mathbb{E}_t$ denotes the expectation w.r.t. the randomness of choosing a path at stage $t$. Second, unlike standard \textsc{Exp3} algorithms that keep track and update on the weight of each path, the weight pushing technique is applied at line $5$ (via the WPS algorithm) and line $8$ (via Algorithm~\ref{Algo:EstLoss} in Section~\ref{sec:Effi}) where we work with edges weights instead of paths weights (recall that $E \ll P$).

%
%
%
\subsection{Running Time Efficiency of the \Algo/ Algorithm}
\label{sec:Effi}
%
In the WPS algorithm mentioned above, it is needed to compute the terms \mbox{$H_t ( s,u )\!:=\! \sum_{\pa\in \pset_{s,u}}{\prod_{e \in \pa}{w_t(e)} }$} and \mbox{$H_t ( u,d )\! :=\! \sum_{\pa \in \pset_{u\!,d}}{\prod_{e \in \pa}{w_t(e)} }$} for any vertex $u$ in~$G$. Intuitively, $H_t(u,v)$ is the aggregate weight of all paths from vertex $u$ to vertex~$v$ at stage $t$. These terms can be computed recursively in $\mO(E)$ time based on dynamic programming. This computation is often referred to as weight pushing. Following the literature, we present in~Appendix~\ref{sec:AppenWeiPush} an explicit algorithm that outputs $H_t(s,u), H_t(u,d), \forall u$ from any input \mbox{$\{w_t(e), e\in \eset\}$}, called the WP algorithm. Then, a path in $G$ is sampled sequentially edge-by-edge based on these terms by the WPS algorithm. Importantly, the WP and WPS algorithms run efficiently in $\mO(E)$ time.

The final non-trivial step to efficiently implement \Algo/ is to compute $q_t(e)$ in line~$8$, i.e., the probability that an edge $e$ is revealed at stage $t$. Note that $q_t(e)$ is the sum of $|\ob_t(e)| = \mO(P)$ terms; therefore, a direct computation is inefficient while a naive application of the weight pushing technique can easily lead to errors. To compute $q_t(e)$, we propose Algorithm~\ref{Algo:EstLoss}, a non-straightforward application of weight pushing, in which we consecutively consider all the edges \mbox{$e^{\prime} \in \mathfrak{R}_t(e)\!:=\! \{e^{\prime} \in \! \eset \!:\! e^{\prime} \!\rightarrow e \}$}. Then, we take the sum of the terms $x_t(\pa)$ of the paths $\pa$ going through $e^{\prime}$ by the weight pushing technique while making sure that each of these terms $x_t(\pa)$ is included only once, even if $\pa$ has more than one edge revealing~$e$ (this is a non-trivial step). In Algorithm~\ref{Algo:EstLoss}, we denote by $C(u)$ the set of the direct successors of any vertex $u\in \mathcal{V}$. We give a proof that Algorithm~\ref{Algo:EstLoss} outputs exactly $q_t(e)$ as defined in line $8$ of Algorithm~\ref{OEdgeAl} in Appendix~\ref{subalgo}. Algorithm~\ref{Algo:EstLoss} runs in $\mO \left({| \mathfrak{R}_t(e)| E } \right)$ time; therefore, line $8$ of Algorithm~\ref{OEdgeAl} can be done in at most $\mO\left( E ^3 \right)$~time. 
\begin{algorithm}[tb!]
       \captionof{algorithm}{Compute $q_t(e)$ of an edge $e$ at stage $t$.}
       \label{Algo:EstLoss}
    \begin{algorithmic}[1]
    \STATE {\bfseries Input:} $e \in \ob_t(\patil_t)$, set $ \mathfrak{R}_t(e)$ and \mbox{$w_t(\bar{e}), \forall \bar{e} \in \eset$}.
    \STATE Initialize $\bar{w}(\bar{e}):= w_t(\bar{e}), \forall \bar{e} \in \eset$ and $q_t(e):=0$.
    \STATE Compute $H^*(s,d)$ by WP Algorithm (see Appendix~\ref{sec:AppenWeiPush}) with input \mbox{$\{w_t(\bar{e}), \bar{e} \in \eset\}$}. 
    \FOR{$e^{\prime} \in  \mathfrak{R}_t(e)$}
    	\STATE Compute ${H}(s,u), {H}(u,d)$, $\forall u \in \vset$ by WP Algorithm with input \mbox{$\{\bar{w}(\bar{e}), \forall \bar{e} \in \eset\}$}.
        \STATE \mbox{$K(e^{\prime}):= {H}(s,u_{e^{\prime}})\! \cdot \! {w}(e^{\prime}) \! \cdot \! {H}(v_{e^{\prime}},d)$} where edge $e^{\prime}$ goes from $u_{e^{\prime}}$ to $ v_{e^{\prime}} \in C(u_{e^{\prime}})$.
        \STATE $q_t(e):= q_t(e) + K(e^{\prime})/ H^*(s,d)$.
        \STATE Update $\bar{w}(e^{\prime}) = 0$.
        \ENDFOR
        \STATE {\bfseries Output:} $q_t(e)$.
     \end{algorithmic}
\end{algorithm}

In conclusion, $\Algo/$ runs in at most $\mO(E^{3} T)$ time, this guarantee works even for the worst-case scenario. For comparison, the FPL-IX algorithm runs in $\mO(E |\vset|^2T)$ time in expectation and in \mbox{$\tilde{\mO}(n^{1/2}E^{3/2}\ln(E/\delta) T^{3/2})$} time with a probability at least $1-\delta$ for an arbitrary $\delta>0$.\footnote{If one runs FPL-IX with Dijkstra's algorithm as the optimization oracle and with parameters chosen by \cite{kocak14}}
That is, FPL-IX might fail to terminate with a strictly positive probability\footnote{A stopping criterion for FPL-IX can be chosen to avoid this issue but it raises the question on how one chooses the criterion such that the regret guarantees hold.} and it is not guaranteed to have efficient running time in all cases. Moreover, although this complexity bound of FPL-IX is slightly better in terms of $E$, the complexity bound of \Algo/ improves that by a factor of $\sqrt{T}$. As is often the case in no-regret analysis, we consider the setting where T is significantly larger than other parameters of the problems; this is also consistent with the motivational applications of the CB and HS games presented in Section~\ref{intro}. Therefore, our contribution in improving the algorithm’s running time in terms of ${T}$ is relevant. 
%
%
 
\subsection{Performance of the \Algo/ Algorithm}
\label{sec:OEPerform}
In this section, we present an upper-bound of the expected regret achieved by the \Algo/ algorithm in the SOPPP. For the sake of brevity, with $x_t(\pa)$ defined in \eqref{pathwei}, for any \mbox{$t \in [T]$} and \mbox{$e\in \eset$}, we denote:
\begin{equation*}
    r_t(e):= \sum \nolimits_{\pa \ni e}{x_t(\pa)} \textrm{ and } Q_t:= \sum \nolimits_{e \in \eset} {{r_t(e)} \big/ {(q_t(e)\! + \!\beta)}}.
\end{equation*}
Intuitively, $r_t(e)$ is the probability that the chosen path at stage $t$ contains an edge $e$ and $Q_t$ is the summation over all the edges of the ratio of this quantity and the probability that the loss of an edge is revealed (plus $\beta$). We can bound the expected regret with this key term $Q_t$.
\begin{restatable}{theorem}{Theoone}
\label{maintheo}
    The expected regret of the \Algo/ algorithm in the \emph{SOPPP} satisfies:
    \begin{equation}
        R_T \le {\ln(P)} \big/ {\eta} + \left[\beta + (n \cdot {\eta})\big/ {2}  \right] \cdot \sum \nolimits_{t\in[T]}{ Q_t}. \label{main}
    \end{equation}
\end{restatable}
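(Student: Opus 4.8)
The plan is to follow the standard \textsc{Exp3} potential-function analysis, adapted to work with edge weights rather than path weights and to handle the biased loss estimator $\hLoss_t$. Define the potential $W_t := \sum_{\pa \in \pset} w_t(\pa) = \sum_{\pa \in \pset}\prod_{e\in\pa} w_t(e)$, so that $x_t(\pa) = w_t(\pa)/W_t$. The first step is to bound the ratio $W_{t+1}/W_t$ from above. Writing $w_{t+1}(\pa) = w_t(\pa)\exp(-\eta\sum_{e\in\pa}\hLoss_t(e))$ and $\hat L_t(\pa) := \sum_{e\in\pa}\hLoss_t(e)$, we get $W_{t+1}/W_t = \sum_{\pa} x_t(\pa)\exp(-\eta \hat L_t(\pa))$. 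Using the inequalities $e^{-x} \le 1 - x + x^2/2$ for $x\ge 0$ (valid since $\hat L_t(\pa)\ge 0$), then $1+y\le e^y$, and taking logarithms, one obtains $\ln(W_{t+1}/W_t) \le -\eta \sum_{\pa} x_t(\pa)\hat L_t(\pa) + (\eta^2/2)\sum_{\pa} x_t(\pa)\hat L_t(\pa)^2$. Summing over $t\in[T]$ telescopes the left side to $\ln(W_{T+1}) - \ln(W_1) = \ln(W_{T+1}) - \ln(P)$, and lower-bounding $W_{T+1} \ge w_{T+1}(\pa^*) = \exp(-\eta\sum_t \hat L_t(\pa^*))$ for any fixed comparator path $\pa^*$, we get the standard inequality
\begin{equation*}
    \sum_{t\in[T]}\sum_{\pa}x_t(\pa)\hat L_t(\pa) - \sum_{t\in[T]}\hat L_t(\pa^*) \le \frac{\ln P}{\eta} + \frac{\eta}{2}\sum_{t\in[T]}\sum_{\pa}x_t(\pa)\hat L_t(\pa)^2.
\end{equation*}

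The second step is to take expectations and relate the quantities involving $\hLoss_t$ back to the true losses $\Loss_t$, which is where the bias of the estimator enters. For the first term, $\mathbb{E}_t[\sum_{\pa}x_t(\pa)\hat L_t(\pa)] = \sum_{\pa}x_t(\pa)\sum_{e\in\pa}\mathbb{E}_t[\hLoss_t(e)] = \sum_e r_t(e)\,\mathbb{E}_t[\hLoss_t(e)]$; since $\mathbb{E}_t[\hLoss_t(e)] = \Loss_t(e)\,q_t(e)/(q_t(e)+\beta) = \Loss_t(e) - \Loss_t(e)\beta/(q_t(e)+\beta) \ge \Loss_t(e) - \beta/(q_t(e)+\beta)$, we have $\mathbb{E}_t[\sum_{\pa}x_t(\pa)\hat L_t(\pa)] \ge \sum_e r_t(e)\Loss_t(e) - \beta\sum_e r_t(e)/(q_t(e)+\beta) = \mathbb{E}_t[L_t(\patil_t)] - \beta Q_t$. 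For the comparator term, since $\pa^*\ni e \Rightarrow \pa^*\in\ob_t(e)$ by semi-bandit feedback, one checks $\mathbb{E}_t[\hLoss_t(e)] \le \Loss_t(e)$ from \eqref{optimis}, hence $\mathbb{E}_t[\hat L_t(\pa^*)] \le L_t(\pa^*)$. So the left-hand side, in expectation, is at least $R_T - \beta\sum_t Q_t$ (after choosing $\pa^*$ to be the optimal fixed path and summing over $t$).

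The third step is to bound the quadratic term $\mathbb{E}_t[\sum_{\pa}x_t(\pa)\hat L_t(\pa)^2]$ by $n\, Q_t$. Here I expand $\hat L_t(\pa)^2 = (\sum_{e\in\pa}\hLoss_t(e))^2 \le n\sum_{e\in\pa}\hLoss_t(e)^2$ using Cauchy–Schwarz together with $\|\pa\|_1\le n$. Then $\mathbb{E}_t[\sum_{\pa}x_t(\pa)\hat L_t(\pa)^2] \le n\sum_{\pa}x_t(\pa)\sum_{e\in\pa}\mathbb{E}_t[\hLoss_t(e)^2] = n\sum_e r_t(e)\mathbb{E}_t[\hLoss_t(e)^2]$. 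Since $\Loss_t(e)\le 1$, we have $\hLoss_t(e)^2 = \Loss_t(e)^2\mathbb{I}_{\{e\in\ob_t(\patil_t)\}}/(q_t(e)+\beta)^2 \le \mathbb{I}_{\{e\in\ob_t(\patil_t)\}}/(q_t(e)+\beta)^2$, and $\mathbb{E}_t[\mathbb{I}_{\{e\in\ob_t(\patil_t)\}}] = q_t(e) \le q_t(e)+\beta$, so $\mathbb{E}_t[\hLoss_t(e)^2] \le 1/(q_t(e)+\beta)$; therefore the quadratic term is at most $n\sum_e r_t(e)/(q_t(e)+\beta) = n\, Q_t$. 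Plugging this and the second-step bound into the inequality from step one and rearranging yields $R_T \le \ln(P)/\eta + \beta\sum_t Q_t + (n\eta/2)\sum_t Q_t$, which is exactly \eqref{main}.

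I expect the main subtlety — rather than a hard obstacle — to be handling the comparator term correctly and being careful that the biased estimator still gives a one-sided bound in the right direction: it must \emph{under}estimate $\Loss_t$ in expectation for the comparator (so that $-\mathbb{E}_t[\hat L_t(\pa^*)] \ge -L_t(\pa^*)$) while the loss of $\patil_t$ is controlled through $r_t(e)$ and the $\beta Q_t$ slack term. A second point requiring care is the interchange of expectations and the tower property, since the adversary is non-oblivious: all steps should be carried out conditionally on the history up to stage $t$ (via $\mathbb{E}_t$) before summing and taking the outer expectation, using that $x_t$, $q_t$, $r_t$ and $\Loss_t$ are all measurable with respect to that history.
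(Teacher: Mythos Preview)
Your proposal is correct and follows essentially the same approach as the paper's proof: the same potential $W_t=\sum_{\pa}w_t(\pa)$, the same second-order bound $e^{-x}\le 1-x+x^2/2$ for $x\ge 0$ combined with $\ln(1+y)\le y$, the same comparator lower bound via $W_{T+1}\ge w_{T+1}(\pa^*)$, the same bias correction yielding the $\beta Q_t$ term, and the same Cauchy--Schwarz/$\|\pa\|_1\le n$ step to obtain the $nQ_t$ bound on the quadratic term. Your remarks about the one-sided direction of the bias and the use of $\mathbb{E}_t$ with the tower property are exactly the points the paper handles (somewhat implicitly), so there is nothing to add.
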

A complete proof of Theorem~\ref{maintheo} can be found in Appendix~\ref{AppenTheo1} and has an approach similar to \cite{alon13,cesa2012} with several necessary adjustments to handle the new biased loss estimator in \Algo/. To see the relationship between the structure of the side-observations of the learner and the bound of the expected regret, we look for the upper-bounds of $Q_t$ in terms of the observation graphs' parameters. Let $\alpha_t$ be the independence number\footnote{The independence number of a directed graph is computed while ignoring the direction of the edges.} of~$G^O_t$, we have the following statement.
\begin{restatable}{theorem}{theoremQ}
        \label{theoremQ}
        Let us define \mbox{$M:=  \lceil  2 E  ^2/\beta \rceil$}, \mbox{$N_t\!:=\! \ln\! \left(1 \!+ \! \frac{M \! + \! E }{\alpha_t}\!\right)$} and \mbox{$K_t \! := \! \ln \!\left(1 \!+\! \frac{nM \! +\!  E }{\alpha_t}\!\right)$}. Upper-bounds of $Q_t$ in different cases of~$G^O_t$ are given in the following table:
            \begin{table}[H]
                \begin{center}
                    \begin{small}
                    \begin{sc}
                    \begin{tabular}{l c c}
                    \toprule
                    {} & satisfies $(A0)$ & not satisfies $(A0)$  \\
                    \midrule
                    Symmetric  & $\alpha_t$ & $n\alpha_t $ 
                    \\ 
                    Non-Symmetric & $1\!\!+\!2\alpha_{t}N_t$ & $2n\left(1\!\!+\!{\alpha_{t}} K_t \right)$ \\
                    \bottomrule
                    \end{tabular}
                    \end{sc}
                    \end{small}
                \end{center}
           \end{table}
\end{restatable}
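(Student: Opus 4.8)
The plan is to bound $Q_t = \sum_{e \in \eset} r_t(e)/(q_t(e)+\beta)$ by relating it to the independence number $\alpha_t$ of the observation graph $G^O_t$, treating the four cases separately but with a common backbone. The starting observation is that for any edge $e$ with $r_t(e)>0$, semi-bandit feedback guarantees that every path through $e$ also reveals $e$, so $q_t(e) \ge r_t(e)$; more generally $q_t(e) = \sum_{\pa \in \ob_t(e)} x_t(\pa) \ge \sum_{\pa \ni e} x_t(\pa) = r_t(e)$. This already gives $r_t(e)/(q_t(e)+\beta) \le 1$ for each of the (at most $E$) edges, but that crude bound is not enough; we need to exploit that edges whose $v_e$-vertices form a clique in $G^O_t$ cannot all have small $q_t(e)$ simultaneously, which is the mechanism that replaces $E$ by $\alpha_t$.

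For the symmetric case under $(A0)$, I would argue as follows. Fix an independent-set decomposition of the vertices of $G^O_t$ (ignoring self-loops) into $\alpha_t$ cliques — equivalently partition $\eset$ into groups where within each group every pair of edges reveals each other. Inside one such group, consider the quantity $\sum_{e \text{ in group}} r_t(e)/(q_t(e)+\beta)$. Because the group is a clique, each edge $e$ in the group has $q_t(e) \ge \sum_{e' \text{ in group}} r_t(e') \cdot (\text{something})$; here assumption $(A0)$ is what makes the accounting clean — since no two edges of the group lie on a common path, the paths contributing to different $q_t(e')$ within the group are (essentially) disjoint, so $q_t(e) \ge \sum_{e' \text{ in group}} r_t(e')$ for every $e$ in the group. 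Writing $a_e := r_t(e)$ and $S := \sum_{e} a_e$ over the group, we get $\sum_e a_e/(S+\beta) \le \sum_e a_e / S \le 1$ (dropping $\beta\ge 0$ in the denominator only where it helps, keeping it where needed). Summing over the $\alpha_t$ groups yields $Q_t \le \alpha_t$. The symmetric-without-$(A0)$ bound $n\alpha_t$ then follows because without $(A0)$ a path can meet a group in up to $n$ edges, so the disjointness is lost by a factor of at most $n$: $q_t(e) \ge \frac{1}{n}\sum_{e' \text{ in group}} r_t(e')$, costing the extra factor $n$.

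For the non-symmetric cases, the clique decomposition is replaced by the standard Lemma (in the style of Kocák et al. / Alon et al.) that bounds $\sum_i p_i/(p_i + \gamma)$ or $\sum_i w_i /(W_i + \beta)$ by something like $c\,\alpha \ln(1 + (\text{range})/(\alpha c'))$ for weights lying in a bounded range — this is exactly where the logarithmic terms $N_t, K_t$ and the constant $M = \lceil 2E^2/\beta\rceil$ enter. Concretely I would: (1) note $r_t(e) \in \{0\} \cup [\text{something}, 1]$ and that $1/(q_t(e)+\beta) \le 1/\beta$, so the "effective" number of relevant edges behaves like $M+E$ (resp. $nM+E$ when $(A0)$ fails, because the per-path multiplicity is up to $n$); (2) apply the directed-graph version of the key lemma with independence number $\alpha_t$ to the vertex set of $G^O_t$, obtaining the $1 + 2\alpha_t N_t$ (resp. $2n(1+\alpha_t K_t)$) bound; the factor $2$ absorbs the $\beta$ in the denominator and the leading $+1$ handles the self-loop / chosen-path edges separately from the rest. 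The main obstacle — and the step deserving the most care — is the directed, non-$(A0)$ case: one must correctly handle that a single chosen path $\patil$ can reveal many edges and that the same $x_t(\pa)$ can appear in $q_t(e)$ for several $e$ in a would-be independent set, so the reduction to the Alon-type lemma needs the multiplicity bound (length $\le n$) to be inserted precisely, and one must verify that Algorithm~\ref{Algo:EstLoss}'s "count each $x_t(\pa)$ once" convention is consistent with the inequality $q_t(e) \ge \frac{1}{n} r_t(\text{clique})$ used in the estimate. Everything else is routine manipulation of the inequalities $q_t(e)\ge r_t(e)$, $q_t(e)+\beta \ge \beta$, and summation over the independent-set cover.
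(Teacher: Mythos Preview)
Your symmetric-case argument has a genuine gap: you cannot in general partition the vertex set of $G^O_t$ into $\alpha_t$ cliques. What you are invoking is the \emph{clique cover number} (equivalently, the chromatic number of the complement), which is always at least $\alpha_t$ but can be strictly larger --- already for $C_5$ the independence number is $2$ while the clique cover number is $3$. So the step ``partition $\eset$ into $\alpha_t$ groups of mutually revealing edges'' simply fails, and the whole clique-by-clique accounting collapses with it. Your within-clique estimate under $(A0)$ is fine (indeed, $(A0)$ forces any two edges revealing a common $e$ to lie on disjoint path-sets, so $q_t(e)\ge \sum_{e'\text{ in clique}} r_t(e')$), but you never had $\alpha_t$ cliques to sum over.

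The paper avoids decomposition entirely. Its backbone is the identity/inequality relating $q_t(e)$ to the in-neighbourhood sum in $G^O_t$: under $(A0)$ one has exactly $q_t(e)=\sum_{e'\rightarrow e} r_t(e')$, and without $(A0)$ one has $q_t(e)\ge \tfrac{1}{n}\sum_{e'\rightarrow e} r_t(e')$ (since a path of length $\le n$ can contribute at most $n$ times across the revealers of $e$). This reduces $Q_t$ to a sum of the form $\sum_e r_t(e)/(\sum_{e'\rightarrow e} r_t(e')+\beta)$ (possibly with the $1/n$), to which one applies black-box graph lemmas: Mannor--Shamir's lemma $\sum_v k(v)/\sum_{v'\rightarrow v}k(v')\le \alpha$ for the symmetric cases, and the Alon et al./Koc\'ak et al.\ lemmas for the directed cases (the latter involving a discretisation at scale $1/M$ and an auxiliary blown-up graph, which is where $M=\lceil 2E^2/\beta\rceil$ and the log factors $N_t,K_t$ actually arise). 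Your sketch for the non-symmetric cases points at the right lemmas but misidentifies the mechanism: the $M$ comes from the discretisation step needed to invoke the in-degree lemma, not from a range bound on $r_t(e)$, and the factor $n$ enters through the $q_t\ge \tfrac{1}{n}\sum r_t$ inequality, not through a per-clique multiplicity count.
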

A proof of this theorem is given in Appendix~\ref{sec:AppenTheo2}. The main idea of this proof is based on several graph theoretical lemmas that are extracted from \cite{alon13,kocak14,mannor11}. These lemmas establish the relationship between the independence number of a graph and the ratios of the weights on the graph's vertices that have similar forms to the key-term $Q_t$. The case where observation graphs are non-symmetric and do not satisfy assumption~$(A0)$ is the most general setting. Moreover, as showed in Theorem~\ref{theoremQ}, the bounds of $Q_t$ are improved if the observation graphs satisfy either the symmetry condition or assumption~$(A0)$. Intuitively, given the same independence numbers, a symmetric observation graph gives the learner more information than a non-symmetric one; thus, it yields a better bound on $Q_t$ and the expected regret. On the other hand, assumption~$(A0)$ is a technical assumption that allows the use of different techniques in the proofs to obtain better bounds. These cases have not been explicitly analyzed in the literature while they are satisfied by several practical situations, including the CB and HS games (see Section~\ref{sec:games}).

Finally, we give results on the upper-bounds of the expected regret, obtained by the \Algo/ algorithm, presented as a corollary of Theorems~\ref{maintheo} and~\ref{theoremQ}.
\begin{restatable}{corollary}{coroll}
\label{corrol1}
In \emph{SOPPP}, let $\alpha$ be an upper bound of $\alpha_{t}, \forall t\in[T]$. With appropriate choices of the parameters $\eta$ and $\beta$, the expected regret of the {\Algo/}~algorithm is:
    \begin{trivlist}
        \item[$(i)$] \mbox{$R_T \le 
         \tilde{\mO} (n\sqrt{T\alpha \ln(P)}  )$} in the general cases.
        \item[$(ii)$]\mbox{$R_T \le 
         \tilde{\mO}  (\sqrt{ n T\alpha \ln(P}  )$} if assumption~$(A0)$ is satisfied by the observation graphs $G^O_t, \forall t\in[T]$.
     \end{trivlist}
\end{restatable}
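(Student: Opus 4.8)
The plan is to combine Theorem~\ref{maintheo} with Theorem~\ref{theoremQ} and then tune the free parameters $\eta,\beta$. First I would fix, for each part, the relevant row of the table in Theorem~\ref{theoremQ}. For part $(i)$ (no structural assumption) the only bound valid for every observation graph is the fully general one, $Q_t \le 2n(1+\alpha_t K_t)$; note this also dominates the other three entries, since $K_t \ge N_t$ and $K_t > \ln 2$, so it is a legitimate uniform choice. For part $(ii)$ I would use $Q_t \le 1 + 2\alpha_t N_t$, which likewise dominates the symmetric-plus-$(A0)$ entry $\alpha_t$. Since the independence number satisfies $\alpha_t \ge 1$ and both $N_t = \ln(1+(M+E)/\alpha_t)$ and $K_t = \ln(1+(nM+E)/\alpha_t)$ are non-increasing in $\alpha_t$, I can replace $\alpha_t$ by the uniform bound $\alpha$ and $N_t,K_t$ by $N := \ln(1+M+E)$ and $K := \ln(1+nM+E)$, obtaining $\sum_{t\in[T]} Q_t \le 2nT(1+\alpha K)$ in the general case and $\sum_{t\in[T]} Q_t \le T(1+2\alpha N)$ under $(A0)$.

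Plugging these into \eqref{main} gives $R_T \le \ln(P)/\eta + (\beta + n\eta/2)\cdot 2nT(1+\alpha K)$ in the general case, and $R_T \le \ln(P)/\eta + (\beta + n\eta/2)\cdot T(1+2\alpha N)$ under $(A0)$. I would then choose $\eta$ to balance the $\ln(P)/\eta$ term against the $n\eta/2$ term: taking $\eta$ of order $\sqrt{\ln(P)/(n^2 T\alpha K)}$ in part $(i)$ turns those two contributions into $\tilde{\mO}(n\sqrt{T\alpha\ln P})$, and taking $\eta$ of order $\sqrt{\ln(P)/(nT\alpha N)}$ in part $(ii)$ turns them into $\tilde{\mO}(\sqrt{nT\alpha\ln P})$. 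For the implicit-exploration term it suffices to take $\beta$ of the same order as $\eta$ (e.g.\ $\beta=\eta$): then $\beta\cdot nT\alpha K$ is of order $\sqrt{T\alpha K\ln P}$, which is no larger than $n\sqrt{T\alpha K\ln P}$, so it does not dominate; the analogous check works under $(A0)$. One also checks that the chosen $\eta,\beta$ lie in $(0,\infty)$.

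The only point requiring care is that $\beta$ enters $M := \lceil 2E^2/\beta\rceil$, hence $N$ and $K$, so the parameter choices are mildly self-referential. I would resolve this by observing that any $\beta$ that is polynomially small in $T$ (e.g.\ $\beta = \Theta(1/\sqrt{T})$) makes $M = \mO(E^2\sqrt{T})$, whence $N = \mO(\ln(nET))$ and $K = \mO(\ln(nET))$ are genuinely logarithmic in all problem parameters and are therefore absorbed into the $\tilde{\mO}(\cdot)$ notation; the balancing argument above then goes through with $\eta,\beta$ chosen as explicit (logarithmically corrected) functions of $n,T,\alpha,\ln P$. This bookkeeping---verifying that every factor hidden inside $\tilde{\mO}$ is only logarithmic and that all three terms of \eqref{main} end up of the claimed order---is essentially the whole obstacle; the remainder is the routine \textsc{Exp3}-type tuning calculation.
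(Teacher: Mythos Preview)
Your proposal is correct and follows essentially the same route as the paper: plug the relevant row of Theorem~\ref{theoremQ} into Theorem~\ref{maintheo}, upper-bound $\alpha_t$ by $\alpha$ (and $N_t,K_t$ by their $\alpha_t=1$ values), then balance $\ln(P)/\eta$ against the $\eta$-term and keep the $\beta$-term subordinate. The paper actually treats all four cells of the table separately (yielding the sharper constants quoted after the corollary), whereas you collapse each part to the dominant non-symmetric row; both are valid since the corollary only asks for the $\tilde{\mO}$ order.

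The one place where the paper is more explicit than you is the self-referential coupling of $\beta$ and $M=\lceil 2E^2/\beta\rceil$. Rather than arguing informally that ``any polynomially small $\beta$ makes $K,N$ logarithmic,'' the paper writes down a closed-form $\beta^*$ (the positive root of a quadratic in $\beta$) and verifies directly that it satisfies the required inequality $\beta \le 1/\sqrt{Tn[1+\alpha\ln(\alpha+n\lceil E^2/\beta\rceil+E)]}$ (and its analogue under $(A0)$), thereby breaking the circularity without any fixed-point reasoning. Your argument is sound in spirit, but if you want a fully rigorous proof you should either exhibit such an explicit $\beta$ and check the inequality, or else make the one-step iteration precise: fix $\beta_0$ independent of $K$ (e.g.\ $\beta_0 = 1/\sqrt{nT\alpha}$), compute the resulting $K(\beta_0)=\mO(\ln(nET\alpha))$, and then set $\eta$ using that value; with this concrete $\beta_0$ the $\beta$-contribution is $\tilde{\mO}(\sqrt{nT\alpha})$, which is indeed dominated by $n\sqrt{T\alpha\ln P}$.
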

A proof of Corollary~\ref{corrol1} and the choices of the parameters~$\beta$ and $\eta$ (these choices are non-trivial) yielding these results will be given in  Appendix~\ref{AppenTheo3}. We can extract from this proof several more explicit results as follows: in the general case, \mbox{$R_T  \le  \mO \left(n \sqrt{T \alpha \ln(P) [1  +  \ln(\alpha + \alpha \ln(\alpha) + E)]} \right)$} when the observations graphs are non-symmetric and \mbox{$R_T\! \le\! (3/2) n\sqrt{T \alpha  \ln(P)} \! +\! \sqrt{n T \alpha} $} if they are all symmetric; on the other hand, in cases that all the observation graphs satisfy $(A0)$, \mbox{$R_T\! \le\! \mO\left(\sqrt{n T\alpha \ln(P) [1 \!+\! 2 \ln(1\!+\!E)]} \right)$} if the observations graphs are non-symmetric and \mbox{$R_T\! \le \! 2\sqrt{n T \alpha  \ln(P)} \!+\! \sqrt{T\alpha}$} if they are all symmetric.

We note that a trivial upper-bound of $\alpha_t$ is the number of vertices of the graph $G^O_t$ which is $E$ (the number of edges in~$G$). In general, the more connected $G^O_t$ is, the smaller $\alpha$ may be chosen; and thus the better upper-bound of the expected regret. In the (classical) semi-bandit setting, $\alpha_t\!=\! E, \forall t\in[T]$ and in the full-information setting, $\alpha_t\! =\! 1$, $\forall t\in[T]$. Finally, we also note that, if $P = \mO(\exp(n))$ (this is typical in practice, including the CB and HS games), the bound in Corollary~\ref{corrol1}-$(i)$ matches in order with the bounds (ignoring the logarithmic factors) given by the \textsc{FPL-IX} algorithm (see \cite{kocak14}). On the other hand, the form of the regret bound provided by the \textsc{Exp3-IX} algorithm (see \cite{kocak14}) does not allow us to compare directly with the bound of \Algo/ in the general SOPPP. \textsc{Exp3-IX} is only analyzed by \cite{kocak14} when $n=1$, i.e., $P=E$; in this case, we observe that the bound given by our \Algo/ algorithm is better than that of \textsc{Exp3-IX} (by some multiplicative~constants).
\section{Colonel Blotto Games and Hide-and-Seek Games as SOPPP}
\label{sec:games}
Given the regret analysis of \Algo/ in SOPPP, we now return to our main motivation, the Colonel Blotto and the Hide-and-Seek games, and discuss how to apply our findings to these games. To address this, we define formally the online version of the games and show how these problems can be formulated as SOPPP in Sections~\ref{sec:Blotto} and \ref{sec:Hide}, then we demonstrate the benefit of using the \Algo/ algorithm for learning in these games (Section~\ref{implement}).
%
\begin{figure*}[htb!]%
\centering
    \subfloat[The graph $G_{3,3}$ corresponding to the CB game with \mbox{$k\!=\!n\!=\!3$}. E.g., the bold-blue path represents the strategy~$(0,0,3)$ while the dash-red path represents the strategy~$(2,0,1)$.]{
            \begin{minipage}{4.2cm}
            \hspace*{0.5cm} 
            \includegraphics[height=0.15\textheight]{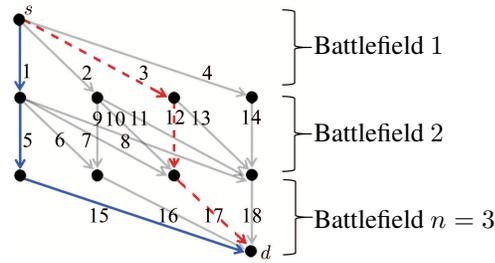}
            \end{minipage}
                \hspace*{0.2cm}
            \begin{minipage}{3.4 cm}
                    Battlefield $1$ \\
                    \\ \\ 
                    Battlefield $2$\\
                    \\  \\
                    Battlefield $n=3$
            \end{minipage}
        } 
            \hspace*{0.5cm}
        \subfloat[The graph $G_{3,3,1}$ corresponding to the HS game with \mbox{$k\!=\!n\!=\!3$} and $\kappa\! =\!1$. E.g., the blue-bold path represents the $(1,1,1)$ search and the red-dashed path represents the $(2,3,2)$~search.]{
            \begin{minipage}{4.7cm}
            \hspace*{0.8 cm}
            \includegraphics[height=0.15\textheight]{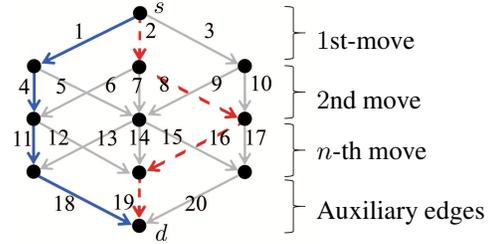}
            \end{minipage}
                \hspace*{0.1cm}    
            \begin{minipage}{3.6 cm}
            \vspace*{0.3cm}
            $1$st-move 
            
            \vspace*{0.4cm}
            $2$nd move
            
            \vspace*{0.3cm}
            $n$-th move
            
            \vspace*{0.4cm}
            Auxiliary edges
            \end{minipage}
}
    \caption{Examples of the graphs corresponding to the CB game and the HS game.}
    \label{fig1}
\end{figure*}
\subsection{Colonel Blotto Games as an SOPPP}
\label{sec:Blotto}
\textbf{The online Colonel Blotto game} (the CB game). This is a game between a learner and an adversary over \mbox{$n\ge 1$}~battlefields within a time horizon~$T>0$. Each battlefield $i~\in~[n]$ has a value $\boldsymbol{b}_t(i)>0$ (unknown to the learner)\footnote{Knowledge on the battlefields' values is not assumed lest it limits the scope of application of our model (e.g., they are unknown in the radio resource allocation problem discussed in Section~\ref{intro}).}
at stage $t$ such that \mbox{$\sum_{i=1}^n{\boldsymbol{b}_t(i)} = 1$}. At stage~$t$, the learner needs to distribute $k$ troops ($k\ge 1$ is fixed) towards the battlefields while the adversary simultaneously allocate hers; that is, the learner chooses a vector $\boldsymbol{z}_t$ in the strategy set \mbox{$S_{k,n}:=\{ \boldsymbol{z} \in \mathbb{N}^n: \sum_{i=1}^n \nolimits{\boldsymbol{z}(i)} = k\}$}. At stage $t$ and battlefield $i\in[n]$, if the adversary's allocation is strictly larger than the learner's allocation $\boldsymbol{z}_t(i)$, the learner loses this battlefield and she suffers the loss $\boldsymbol{b}_t(i)$; if they have tie allocations, she suffers the loss $\boldsymbol{b}_t(i)/2$; otherwise, she wins and suffers no loss. At the end of stage $t$, the learner observes the loss from each battlefield (and which battlefield she wins, ties, or loses) but not the adversary's allocations. The learner's loss at each time is the sum of the losses from all the battlefields. The objective of the learner is to minimize her expected regret. Note that similar to SOPPP, we also consider the non-oblivious adversaries in the CB~game.

While this problem can be formulated as a standard \OCO/, it is difficult to derive an efficient learning algorithm under that formulation, due to the learner's exponentially large set of strategies that she can choose from per stage. Instead, we show that by reformulating the problem as an SOPPP, we will be able to exploit the advantages of the
\Algo/ algorithm to solve it.
To do so, first note that the learner can deduce several side-observations as follows: $(i)$ if she allocates $\boldsymbol{z}_t(i)$ troops to battlefield $i$ and wins, she knows that if she had allocated more than $\boldsymbol{z}_t(i)$ troops to $i$, she would also have won; $(ii)$ if she knows the allocations are tie at battlefield $i$, she knows exactly the adversary's allocation to this battlefield and deduce all the losses she might have suffered if she had allocated differently to battlefield~$i$; $(iii)$ if she allocates $\boldsymbol{z}_t(i)$ troops to battlefield $i$ and loses, she knows that if she had allocated less than $\boldsymbol{z}_t(i)$ to battlefield~$i$, she would also have lost. 

Now, to cast the CB game as SOPPP, for each instance of the parameters $k$ and $n$, we create a DAG $G:=G_{k,n}$ such that the strategy set $S_{k,n}$ has a one-to-one correspondence to the paths set $\pset$ of $G_{k,n}$. Due to the lack of space, we only present here an example illustrating the graph of an instance of the CB game in Figure~\ref{fig1}-(a) and we give the formal definition of $G_{k,n}$~in Appendix~\ref{graph}. The graph $G_{k,n}$ has \mbox{$E\!= \mO(k^2n)$} edges and \mbox{$P=|S_{k,n}|= \Omega\left(2^{\min\{n-1,k\}} \right)$} paths
while the length of every path is~$n$. Each edge in $G_{k,n}$ corresponds to allocating a certain amount of troops to a battlefield. Therefore, the CB game model is equivalent to a PPP where at each stage the learner chooses a path in $G_{k,n}$ and the loss on each edge is generated from the allocations of the adversary and the learner (corresponding to that edge) according to the rules of the game.
%
At stage $t$, the (semi-bandit) feedback and the side-observations\footnote{E.g., in Figure~\ref{fig1}-(a), if the learner chooses a path going through edge $10$ (corresponding to allocating $1$ troop to battlefield $2$) and wins (thus, the loss at edge $10$ is $0$), then she deduces that the losses on the edges $6,7,8,10,11$, and $13$ (corresponding to allocating at least $1$ troop to battlefield $2$) are all~$0$.} deduced by the learner as described above infers an observation graph $G^O_t$. This formulation transforms any CB game into an~SOPPP.

Note that since there are edges in $G_{m,n} $ that refer to the same allocation (e.g., the edges $5,9,12$, and $14$ in $G_{3,3}$ all refer to allocating~$0$ troops to battlefield~$2$), in the observation graphs, the vertices corresponding to these edges are always connected. Therefore, an upper bound of the independence number $\alpha_t$ of $G^O_t$ in the CB game is \mbox{$\alpha_{\textrm{CB}} = n(k+1) = \mO(nk)$}.
Moreover, we can verify that the observation graph $G^O_t$ of the CB game \emph{satisfies assumption~$(A0)$} for any $t$ and it is \emph{non-symmetric}.
%
%
%
\subsection{Hide-and-Seek Games as an SOPPP}
\label{sec:Hide}
\textbf{The online Hide-and-Seek game} (the HS game). This is a repeated game (within the time horizon $T>0$) between a hider and a seeker. In this work, we consider that the learner plays the role of the seeker and the hider is the adversary. There are $k$ locations, indexed from $1$ to~$k$. At stage~$t$, the learner sequentially chooses $n$ locations ($1 \le n \le k$), called an {$n$-search}, to seek for the hider, that is, she chooses \mbox{$\boldsymbol{z}_t \in [k]^{n}$} (if $\boldsymbol{z}_t(i)\! =\! j$, we say that location $j$ is her $i$-th move). The hider maliciously assigns losses on all $k$ locations (intuitively, these losses can be the wasted time supervising a mismatch location or the probability that the hider does not hide there, etc.). In the HS game, the adversary is non-oblivious; moreover, in this work, we consider the following condition on how the hider/adversary assigns the losses on the locations:
\begin{trivlist}
\item[$(C1)$] \emph{At stage $t$, the adversary secretly assigns a loss $\boldsymbol{b}_t(j)$ to each location $j \in [k]$ (unknown to the learner). These losses are fixed throughout the $n$-search of the learner. }
\end{trivlist}
The learner's loss at stage $t$ is the sum of the losses from her chosen locations in the $n$-search at stage $t$, that is \mbox{$\sum \nolimits_{i \in [n],j \in [k]}{ {\mathbb{I}_{\{\boldsymbol{z}_t(i) = j \}}} \boldsymbol{b}_t(j) }$}. Moreover, often in practice the $n$-search of the learner needs to satisfy some constraints. In this work, as an example, we use the following constraint: \mbox{$|\boldsymbol{z}_t(i) - \boldsymbol{z}_t(i+1)| \le \kappa, \forall i\in [n]$} for a fixed $\kappa \in [0,k-1]$ (called the \emph{coherence constraint}), i.e., the seeker cannot search too far away from her previously chosen location.\footnote{Our results can be applied to HS games with other constraints, such as \mbox{$\boldsymbol{z}_t(i)  \le \boldsymbol{z}_t(i+1) , \forall i \in [n]$}, i.e., she can only search forward; or, \mbox{$\sum \nolimits_{i \in [n]}{  \mathbb{I}_{\{\boldsymbol{z}_t(i)  = k^* \}}} \le \kappa$}, i.e., she cannot search a location $k^* \in [k]$ more than $\kappa$ times, etc.} At the end of stage $t$, the learner only observes the losses from the locations she chose in her $n$-search, and her objective is to minimize her expected regret over $T$. 

Similar to the case of the CB game, tackling the HS game as a standard \OCO/~is computationally involved. As such, we follow the SOPPP formulation instead. To do this, we create a DAG \mbox{$G:=G_{k,n,\kappa}$} whose paths set has a one-to-one correspondence to the set containing all feasible $n$-search of the learner in the HS game with $k$ locations under $\kappa$-coherent constraint.  Figure~\ref{fig1}-(b) illustrates the corresponding graph of an instance of the HS game and we give a formal definition of $G_{k,n,\kappa}$ in Appendix~\ref{graph}. The HS game is equivalent to the PPP where the learner chooses a path in $G_{k,n,\kappa}$ and edges' losses are generated by the adversary at each stage (note that to ensure all paths end at $d$, there are $n$ auxiliary edges in $G_{k,n,\kappa}$ that are always embedded with $0$ losses). Note that there are \mbox{$E= \mO(k^2 n)$} edges and $P = \Omega(\kappa ^{n-1})$ paths in~$G_{k,n,\kappa}$. Moreover, knowing that the adversary follows condition~$(C1)$, the learner can deduce the following side-observations: within a stage, the loss at each location remains the same no matter when it is chosen among the $n$-search, i.e., knowing the loss of choosing location $j$ as her $i$-th move, the learner knows all the loss if she chooses location $j$ as her $i^{\prime}$-th move for any $i^{\prime} \neq i$. 
The semi-bandit feedback and side-observations as described above generate the observation graphs $G^O_t$ (e.g., in Figure~\ref{fig1}-(b), the edges $1,4,6,11$, and $13$ represent that location $1$ is chosen; thus, they mutually reveal each other). The independence number of $G^O_t$ is $\alpha_{\textrm{HS}} = k$ for any~$t$. The observation graphs of the HS game are \emph{symmetric} and \emph{do not satisfy}~$(A0)$. 
Finally, we consider a relaxation of condition $(C1)$:
\begin{trivlist}
\item[$(C2)$] \emph{At stage $t$, the adversary assigns a loss $\boldsymbol{b}_t(j)$ on each location $j \in [k]$. For $i = 2, \ldots,n$, after the learner chooses, say location $j_i$, as her $i$-th move, the adversary can observe that and change the losses $\boldsymbol{b}_t(j)$ for any location that has not been searched before by the learner,\footnote{An interpretation is that by searching a location, the learner/seeker ``discovers and secures" that location; therefore, the adversary/hider cannot change her assigned loss at that place.} i.e., she can change the losses $\boldsymbol{b}_t(j), \forall j \notin \{j_1,\ldots, j_i\}$.}
\end{trivlist}
%
By replacing condition $(C1)$ with condition $(C2)$, we can limit the side-observations of the learner: she can only deduce that if $i_1 < i_2$, the edges in $G_{k,n,\kappa}$ representing choosing a location as the $i_1\textrm{-th}$ move reveals the edges representing choosing that same location as the $i_2$-th move; but \textit{not vice versa}. In this case, the observation graph $G^O_t$ is non-symmetric; however, its independence number is still $\alpha_{\rm HS}=k$ as in the HS games with condition~$(C1)$. 
\subsection{Performance of \Algo/ in the Colonel Blotto and Hide-and-Seek Games}
\label{implement}
Having formulated the CB game and the HS game as SOPPPs, we can use the \Algo/ algorithm in these games. From Section~\ref{sec:Effi} and the specific graphs of the CB and HS game, we can deduce that \Algo/ runs in at most $\mO(k^6n^3 T)$ time. We remark again that \Algo/'s running time is linear in $T$ and efficient in all cases unlike when we run FPL-IX in the CB and HS games. Moreover, we can deduce the following result directly from Corollary~\ref{corrol1}:
\begin{corollary}
\label{corol2}
The expected regret of the \Algo/ algorithm satisfies:
\begin{trivlist}
\item[$(i)$] $R_T \le \tilde{\mO}(\sqrt{nT \alpha_{\textrm{CB}} \ln(P)}) = \tilde{\mO}(\sqrt{Tn^{3}k})$ in the CB games with $k$ troops and $n$ battlefields.
\item[$(ii)$] $R_T \le \tilde{\mO}(n\sqrt{T \alpha_{\textrm{HS}}\ln(P) }) =  \tilde{\mO}( \sqrt{Tn^3k})$ in the HS games with $k$ locations and $n$-search.
\end{trivlist}
\end{corollary}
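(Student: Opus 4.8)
The plan is to obtain Corollary~\ref{corol2} as a direct specialization of Corollary~\ref{corrol1} by plugging in the structural parameters of the two graphs $G_{k,n}$ and $G_{k,n,\kappa}$ that were identified in Sections~\ref{sec:Blotto} and~\ref{sec:Hide}. Concretely, for each game I would (a) recall the independence-number bound, (b) recall the count $P$ of paths so that $\ln(P)$ can be evaluated, (c) determine which branch of Corollary~\ref{corrol1} applies according to whether assumption~$(A0)$ holds, and then (d) substitute and simplify the $\tilde{\mO}$ expression. No genuinely new argument is needed; the work is bookkeeping with the asymptotic notation.

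For part~$(i)$, the CB game: from Section~\ref{sec:Blotto}, the observation graphs $G^O_t$ satisfy $(A0)$, so the improved bound Corollary~\ref{corrol1}-$(ii)$ applies, giving $R_T \le \tilde{\mO}(\sqrt{nT\alpha\ln(P)})$ with $\alpha = \alpha_{\textrm{CB}} = n(k+1) = \mO(nk)$. Since $P = |S_{k,n}| = \Omega(2^{\min\{n-1,k\}})$ but also $P \le \binom{k+n-1}{n-1} \le (k+1)^{n}$ (each path has length $n$ and at each battlefield at most $k+1$ choices of cumulative allocation), we have $\ln(P) = \mO(n\ln k)$, which the $\tilde{\mO}$ notation absorbs up to logarithmic factors, so $\ln(P)$ contributes only an $\tilde{\mO}(1)$ factor beyond the polynomial part; thus $\sqrt{nT\alpha\ln(P)} = \tilde{\mO}(\sqrt{nT\cdot nk}) = \tilde{\mO}(\sqrt{Tn^{3}k})$. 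Wait---let me re-examine: the statement in the corollary keeps $\ln(P)$ inside the root, but since $\tilde\mO$ hides logarithmic terms and $\ln(P)$ itself is polylogarithmic in the natural parameters only if $P$ is at most exponential; here $\ln(P) = \mO(n\ln k)$ carries a genuine factor $n$. So in fact $\sqrt{nT\alpha_{\textrm{CB}}\ln(P)} = \tilde\mO(\sqrt{nT\cdot nk\cdot n}) = \tilde\mO(\sqrt{Tn^{3}k})$, matching the claim (the extra factor $n$ from $\ln P$ and the factor $n$ hidden in $\alpha_{\textrm{CB}} = \mO(nk)$ combine with the leading $n$ to give $n^{3}$).

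For part~$(ii)$, the HS game: from Section~\ref{sec:Hide}, the observation graphs are symmetric but do \emph{not} satisfy $(A0)$, so only the general bound Corollary~\ref{corrol1}-$(i)$ is available, $R_T \le \tilde{\mO}(n\sqrt{T\alpha\ln(P)})$, with $\alpha = \alpha_{\textrm{HS}} = k$ and $P = \Omega(\kappa^{n-1})$; moreover $P \le k^{n}$ since each of the $n$ moves picks one of $k$ locations, so $\ln(P) = \mO(n\ln k) = \tilde\mO(n)$. Substituting gives $n\sqrt{Tk\cdot n} = \tilde\mO(\sqrt{n^{2}\cdot Tkn}) = \tilde\mO(\sqrt{Tn^{3}k})$. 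I would remark that the two games land on the same rate $\tilde\mO(\sqrt{Tn^{3}k})$ for different reasons: the CB game pays an extra $n$ through the larger independence number $\mO(nk)$ but saves the linear-in-$n$ factor by virtue of $(A0)$, whereas the HS game has the smaller independence number $k$ but pays the linear-in-$n$ factor because $(A0)$ fails---and for condition $(C2)$ the same bound holds verbatim since $\alpha_{\textrm{HS}} = k$ is unchanged.

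I do not anticipate a real obstacle here; the only point requiring care is being explicit that the $\tilde\mO$ on the right-hand side is allowed to swallow the $\ln(P) = \mO(n\log k)$ term (so that one factor of $n$ migrates out of the logarithm into the polynomial part) and the $\log k$, $\log n$ factors, and confirming that the graph sizes $E = \mO(k^{2}n)$ only affect the running-time statement (already handled before the corollary) and not the regret bound. I would therefore keep the proof to a couple of lines: cite the applicable branch of Corollary~\ref{corrol1}, insert $\alpha_{\textrm{CB}}$, $\alpha_{\textrm{HS}}$ and the bounds on $\ln(P)$, and simplify.
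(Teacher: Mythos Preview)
Your proposal is correct and mirrors exactly what the paper does: it states that Corollary~\ref{corol2} is deduced directly from Corollary~\ref{corrol1} by inserting $\alpha_{\textrm{CB}}=\mO(nk)$ with branch~$(ii)$ (since $(A0)$ holds) and $\alpha_{\textrm{HS}}=k$ with branch~$(i)$ (since $(A0)$ fails), together with $\ln(P)=\mO(n\log k)$ in each case. Your closing remark about the trade-off (larger $\alpha$ but $(A0)$ in CB versus smaller $\alpha$ but no $(A0)$ in HS) is also the same observation the paper makes right after the corollary.
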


At a high-level, given the same scale on their inputs, the independence numbers of the observation graphs in HS games are smaller than in CB games (by a multiplicative factor of~$n$). However, since assumption~$(A0)$ is satisfied by the observation graphs of the CB games and not by the HS games, the expected regret bounds of the \Algo/ algorithm in these games have the same order of magnitude. From Corollary~\ref{corol2}, we note that in the CB games, the order of the regret bounds given by \Algo/ is better than that of the FPL-IX algorithm (thanks to the fact that $(A0)$ is satisfied).\footnote{More explicitly, in the CB game, FPL-IX has a regret at most $\mO \left(\ln(k^2 n^2 T ) \sqrt{\ln(k^2n)(k^2n^4 \!+\! C n^4 k T)}\right)\!=\!\tilde{\mO}(\sqrt{Tn^4k})$ (C is a constant indicated by \cite{kocak14}) and \Algo/'s regret bound is $ \mO \left(\sqrt{n^2k T \! \cdot\! \min\{n\!-\!1,k\}[1\!+\! 2\ln(1 \!+\! k^2n)] }\right)$ (if \mbox{$n-1 \le k$}, we can rewritten this bound as $\tilde{\mO}(\sqrt{T n^3k})$).} On the other hand, in the HS games with $(C1)$, the regret bounds of the \Algo/ algorithm improves the bound of FPL-IX but they are still in the same order of the games' parameters (ignoring the logarithmic factors).\footnote{More explicitly, in HS games with~$(C1)$, FPL-IX's regret is \mbox{$\mO \left(\ln(k^2 n^2 T) \sqrt{\ln(k^2n)(k^2n^4 + C n^3 k T ) }\right) \!=\!\tilde{\mO}(Tn^3k)$} and $\Algo/$'s regret is \mbox{$\mO\left((3/2)\sqrt{n^3kT\ln(k)} \!+ \! \sqrt{nkT} \right) \!=\! \tilde{\mO}(Tn^3k)$} (similar results can be obtained for the HS games with $(C2)$).} Note that the the regret bound of \Algo/ in the HS game with Condition~$(C1)$ (involving symmetric observation graphs) is slightly better than that in the HS game with Condition~$(C2)$.

We also conducted several numerical experiments that compares the running time and the actual expected regret of \Algo/ and FPL-IX in CB and HS games. The numerical results are in consistent with theoretical results in this work. Our code for these experiments can be found at \url{https://github.com/dongquan11/CB-HS.SOPPP}.

Finally, we compare the regret guarantees given by our \Algo/ algorithm and by the OSMD algorithm (see \cite{AudibertBL2014})---the benchmark algorithm for \OCO/ with semi-bandit feedback (although OSMD does not run efficiently in general): \Algo/ is better than OSMD in CB games if $\mO\left(n \cdot \ln{( n^3 k^5 \sqrt{T})} \right) \le k$; in HS games $(C1)$ if $\mO (n \ln{\kappa} ) \le k$ and in the HS games with condition $(C2)$ if \mbox{$n \cdot \ln{\kappa}  \ln{(n^4 k^5 \sqrt{T})} \le \mO(k)$}.
We give a proof of this statement in Appendix~\ref{prooflogarithm}. 
Intuitively, the regret guarantees of \Algo/ is better than that of OSMD in the CB games where the learner’s budget is sufficiently larger than the number of battlefields and in the HS games where the total number of locations is sufficiently larger than the number of moves that the learner can make in each stage.
%

\section{Conclusion}
In this work, we introduce the \Algo/ algorithm for the path planning problem with semi-bandit feedback and side-observations. \Algo/ is always efficiently implementable. Moreover, it matches the regret guarantees compared to that of the \textsc{FPL-IX} algorithm (\Algo/ is better in some cases). We apply our findings to derive the first solutions to the online version of the Colonel Blotto and Hide-and-Seek games.
This work also extends the scope of application of the PPP model in practice, even for large~instances.

\paragraph{Acknowledgment:} This work was supported by ANR through the “Investissements d’avenir” program (ANR-15-IDEX-02) and grant ANR-16-TERC0012; and by the Alexander von Humboldt~Foundation. Partial of this work was done when the authors were at LINCS.

{
\fontsize{9.0pt}{10.0pt} \selectfont
\bibliography{mybibfile}
\bibliographystyle{aaai}
}

\newpage
\appendix
\section*{Appendix}

\section{Weight Pushing for Path Sampling}
\label{sec:AppenWeiPush}
We re-visit some useful results in the literature. In this section, we consider a DAG $G$ with parameters as introduced in Section~\ref{sec:SOPPPFor}. For simplicity, we assume that each edge in $\eset$ belongs to at least one path in $\pset$. Let us respectively denote by $C(u)$ and $F(u)$ the set of the direct successors and the set of the direct predecessors of any vertex $u\in \mathcal{V}$. Moreover, let $e_{[u,v]}$ and $\pset_{u,v}$ respectively denote the edge and the set of all paths from vertex $u$ to vertex~$v$. 

Let us consider a weight $w(e)\!>\!0$ for each edge $e\in \eset$. It is needed in the \Algo/~algorithm to sample a path $\patil \in \pset$ with the probability:
\begin{equation}
x(\patil):= {\left[\prod \nolimits_{e \in \patil}{w(e)}\right]}  \Big/   \left[{\sum \nolimits_{\pa \in \pset} {\prod \nolimits_{e \in \pa}{w(e)} }}\right].\label{equAlgo2} 
\end{equation}
A direct computation and sampling from \mbox{$x(\patil), \forall \patil \in \pset$} takes $\mO(P)$ time which is very inefficient. To efficiently sample the path, we first label the vertices set by \mbox{$\mathcal{V}\!=\! \{s=u_0,u_1,\ldots, d\!=\!u_K\}$} such that if there exists an edge connecting $u_i$ to $u_j$ then $i<j$. We then define the following terms for each vertex $u \in \vset$:
\begin{equation*}
H\!( s,u )\!:=\! \sum_{\pa\! \in \pset_{s\!,u}}{\prod_{e \in \pa}{w(e)} } \textrm{ and }
H\!( u,d )\! :=\! \sum_{\pa \in \pset_{u\!,d}}{\prod_{e \in \pa}{w(e)} }.
\end{equation*}
Intuitively, $H(u,v)$ is the aggregate weight of all paths from vertex $u$ to vertex~$v$ and $H(s,d)$ is exactly the denominator in~\eqref{equAlgo2}. These terms $H(s,u)$ and $H(u,d), \forall u \in \vset$ can be recursively computed by the WP algorithm (i.e., Algorithm~\ref{dynpro}) that runs in $\mO( E )$ time, through dynamic programming. This is called \emph{weight pushing} and it is used by \cite{gyorgy2007,sakaue2018,takimoto2003}. 
%
\begin{algorithm}[htb!]
  \caption{WP Algorithm.}
  \label{dynpro}
\begin{algorithmic}[1]
\STATE {\bfseries Input:} Graph $G$, set of weights \mbox{$\{w(e), e\in \eset\}$}.
\STATE Initialization \mbox{$H(s,u_0):=H(u_K,d):=1$}.
	\FOR{$k=1$ {\bfseries to} $K$}
    \STATE $H(u_{K-k},d) := \sum \limits_{v\in C(u_{K-k})} {w(e_{[u_{K-k},v]}) H(v,d)}$.
    \STATE $H(s,u_k) := \sum \limits_{v\in F(u_k)} {w(e_{[v,u_k]}) H(s,v)}$.
    \ENDFOR
 \STATE {\bfseries Output:} $H(s,u), H(u,d)$, $\forall u \in \vset$. 
 \end{algorithmic}
 \end{algorithm}

\begin{algorithm}[hb!]
  \caption{WPS Algorithm.}
  \label{Algo:sample}
\begin{algorithmic}[1]
\STATE {\bfseries Input:} Graph $G$, set of weights \mbox{$\{w(e), e\in \eset\}$}.
 \STATE $H(u,d), \forall u \in \mathcal{V}$ are computed by Algorithm~\ref{dynpro}.
 \STATE Initialize $\mathsf{Q} := \{ s \}$, vertex $u:= s$.
  \WHILE{$u \neq d$}
		\STATE Sample a vertex $v$ from $\mathcal{C}(u)$ with probability $w(e_{[u,v]}){H(v,d)} \big/{H(u,d)}$.
		\STATE Add $v$ to the set $\mathsf{Q}$ and update $u:= v$.
	\ENDWHILE
\STATE {\bfseries Output:} $\patil \in \pset$ going through all the vertices in~$\mathsf{Q}$
 \end{algorithmic}
\end{algorithm}

Based on the WP algorithm (i.e., Algorithm~\ref{dynpro}), we construct the WPS algorithm (i.e., Algorithm~\ref{Algo:sample}) that uses the weights $w(e),e\in \eset$ as inputs and randomly outputs a path in $\pset$. Intuitively, starting from the source vertex \mbox{$s=u_0$}, Algorithm~\ref{Algo:sample} sequentially samples vertices by vertices based on the terms $H(u,v)$ computed by Algorithm~\ref{dynpro}. It is noteworthy that Algorithm~\ref{Algo:sample} also runs in $\mO(E)$ time and it is trivial to prove that the probability that a path $\pa$ is sampled from Algorithm~\ref{Algo:sample} matches exactly $d(\pa)$.



%

%

\section{Proof of Algorithm~\ref{Algo:EstLoss}'s Output}
\label{subalgo}
\begin{proof}
Fixing an edge $e\in \eset$, we prove that when Algorithm~\ref{Algo:EstLoss} takes the edges weights $\{w_t(e), e\in \eset \}$ as the input, it outputs exactly \mbox{$q_t= \sum \nolimits_{\pa \in \ob_t(e)}{x_t(\pa)} $}. We note that if \mbox{$e^{\prime} \in \mathfrak{R}_t(e):=\{e^{\prime}: e^{\prime} \rightarrow e\}$}, then \mbox{$\{\pa \in \pset: \pa \ni e^{\prime} \} \subset \ob_t(e)$}.

We denote $|\mathfrak{R}_t(e)| = \rho_e$ and label the edges in the set $\mathfrak{R}_t(e)$ by $\{e_1,e_2,\ldots, e_{\rho_e}\}$. The for-loop in lines $4$-$8$ of Algorithm~\ref{Algo:EstLoss} consecutively run with the edges in $R_t(e)$ as follows:
\begin{trivlist}
    \item[$(i)$] After the for-loop runs for $\! e_1$, we have \mbox{$K(e_1):= \sum \nolimits_{\pa \ni e_1}{\prod \nolimits_{\bar{e} \in \pa} } \bar{w}(\bar{e}) = \sum \nolimits_{\pa \ni e_1}{w_t(\pa)}$}; therefore, \mbox{$q_t(e) = \sum \nolimits_{\pa \ni e_1} x_t(\pa)$} since $H^*(s,d) = \sum \nolimits_{\pa \in \pset}{w_t(\pa)}$ computed from the original weights \mbox{$w_t(\bar{e}), \bar{e}\in \eset$}. Due to line~$8$ that sets $\bar{w}(e_1):=0$, henceforth in Algorithm~\ref{Algo:EstLoss}, the weight $\bar{w}(\pa):= \prod \nolimits_{e\in \pa} \bar{w}(e)$ of any path $\pa$ that contains $e_1$ is set to~$0$.
    \item[$(ii)$] Let the for-loop run for $e_2$, we have \mbox{$K(e_2):= \sum \nolimits_{\pa \ni e_2} {\bar{w}(\pa)} = \sum \limits_{\{\pa \ni e_2\} \backslash \{ \pa \ni e_1\}}{w_t(\pa)}$} because any path $\pa \ni e_1$ has the weight $\bar{w}(\pa)=0$. Therefore, \mbox{$q_t(e) = \sum \nolimits_{\pa \ni e_1} x_t(\pa) +  \sum \nolimits_{\{\pa \ni e_2\} \backslash \{\pa \ni e_1\}} x_t(\pa)$}.
    \item[$(iii)$] Similarly, after the for-loop runs for $e_i$ (where \mbox{$i \in \{3,\ldots, \rho_{e} \}$}), we have:
    \begin{equation*}
    q_t(e) = \sum_{k=1}^{i}{\left(  \sum \limits_{ \{\pa \ni e_k\} \backslash \bigcup \limits_{j< k}{\{\pa \ni e_j \}} } {x_t(\pa)}    \right)}.
    \end{equation*}
    \item[$(iv)$] Therefore, after the for-loop finishes running for every edge in $\mathfrak{R}_t(e)$; we have \mbox{$q_t: = \sum_{\pa \in \ob_t(e)}{x_t(\pa)}$} where each term $x_t(\pa)$ was only counted once even if $\pa$ contains more than one edge that reveals the edge~$e$.
\end{trivlist}
\end{proof}
%
%
\section{Proof of Theorem~\ref{maintheo}}
\label{AppenTheo1}
\Theoone*
\begin{proof}
We first denote\footnote{We recall that $w_t(\pa):= \prod \nolimits_{e\in \pa} {w_t(e)}$.} \mbox{$W_t:= \sum \nolimits_{\pa \in \pset} w_t(\pa), \forall t\in[T]$}. From line $9$ of Algorithm~\ref{OEdgeAl}, we trivially~have:
\begin{align}
w_{t+1}(\pa) &= w_t(\pa) \cdot \exp(- \eta \hat{L}_t(\pa)), \forall \pa \in \pset, \forall t \in [T-1]. \label{patwei} 
\end{align}

We recall that $\hat{L_t}(\pa):= \sum \nolimits_{e \in \pa} {\hLoss_{t}(e)}$ and the notation $\mathbb{E}_t$ denoting the expectation w.r.t. to the randomness in choosing $\patil_t$ in Algorithm~\ref{OEdgeAl} (i.e., w.r.t. the information up to time $t-1$). From \eqref{optimis}, we~have:
\begin{equation}
\mathbb{E}_t\left[ \hat{L}_t(\pa) \right] \le L_t(\pa):= \sum \nolimits_{e\in \pa}{\Loss_t(e)}, \forall \pa \in \pset. \label{estiloss}
\end{equation}

Under the condition that \mbox{$0< \eta$}, we obtain:
\begin{align}
    \frac{W_{t+1}}{W_t} & = \sum \nolimits_{\pa \in \pset}{\frac{w_{t+1}(\pa)}{W_t}} \nonumber \\
    & =  \sum \nolimits_{\pa \in \pset}{\frac{w_{t}(\pa)\cdot \exp(-\eta \hat{L}_t(\pa))}{W_t}} \nonumber \\
    & = \sum \nolimits_{\pa \in \pset}{x_t(\pa) \cdot \exp(-\eta \hat{L}_t(\pa)))} \nonumber \\
    & \le \sum \limits_{\pa \in \pset} \left[x_t(\pa) \left( 1- \eta\hat{L}_t(\pa) + \frac{\eta^2}{2} (\hat{L}_t(\pa))^2\right) \right] \nonumber \\
    & = 1\! -\! \sum \limits_{\pa \in \pset} \left[x_t(\pa) \left( \eta\hat{L}_t(\pa) \! -\! \frac{\eta^2}{2} (\hat{L}_t(\pa))^2\right) \right]. \label{first}
\end{align}
Here, the second equality comes from \eqref{patwei} and the inequality comes from the fact that \mbox{$\exp(-a) \le 1 -a +a^2/2$} for \mbox{$a:=\eta\hat{L}_t(\pa) \ge 0 $}.
Now, we use the inequality \mbox{$\ln(1-y) \le -y$}, \mbox{$\forall y<1$} for $y:= \! \sum \nolimits_{\pa \in \pset} \left[x_t(\pa) \left( \eta\hat{L}_t(\pa) \! -\! \frac{\eta^2}{2} (\hat{L}_t(\pa))^2\right) \right]$,\footnote{We can easily check that $\eta \hat{L}_t(\pa) \!-\!\eta^2\hat{L}_t(\pa)^2/2 \! < \!1$ for any $\eta>0$ and thus, $\sum \nolimits_{\pa \in \pset} \left[x_t(\pa) \left( \eta\hat{L}_t(\pa) \! -\! \frac{\eta^2}{2} (\hat{L}_t(\pa))^2\right) \right] < 1$.} then from \eqref{first}, we obtain 
\begin{align}
     & \ln \left( \frac{W_{T+1}}{W_1}\right) \nonumber \\
     = & \sum \limits _{t=1}^T{\ln \left(\frac{W_{t+1}}{W_t} \right)} \nonumber \\
     \le & \sum \limits_{t=1}^T\!\!{\left(\!-\eta \sum \limits_{\pa \in \pset}{x_t(\pa)\hat{L}_t(\pa)}\!+\!\frac{\eta^2}{2}\sum \limits_{\pa \in \pset}{x_t(\pa) (\hat{L}_t(\pa))^2}\!\!\right)} .\label{lnWT}
\end{align}
On the other hand, let us fix a path $\pa^* \in \pset$, then
\begin{align}
    & \ln \left( \frac{W_{T+1}}{W_1}\right)  \nonumber\\
    \ge & \ln \left( \frac{w_{T+1}(\pa^*)}{W_1}\right) \nonumber \\
    = & \ln \frac{w_{T}(\pa^*) \exp(-\eta \hat{L}_T(\pa^*))}{P} \nonumber\\
    = & \ln \frac{w_{T\!-\!1}(\pa^*) \exp(-\eta \hat{L}_T(\pa^*)\!-\!\eta \hat{L}_{T-1}(\pa^*))}{P} \nonumber\\
    = & -\eta \sum \limits_{t=1}^T{\hat{L}_t(\pa^*)} - \ln(P). \label{reverse}
\end{align}
In the arguments leading to \eqref{reverse}, we again use \eqref{patwei} and the fact that \mbox{$w_1(\pa) = 1, \forall \pa\in\pset$}, including~$w_1(\pa^*)$. Therefore, combining \eqref{lnWT} and \eqref{reverse} then dividing both sides by $\eta$, we have:
\begin{align}
 & \sum \limits_{t=1}^T {\sum \limits_{\pa \in \pset} {x_t(\pa) \hat{L}_t(\pa)}}   \nonumber \\
 \le &  \frac{\ln(P)}{\eta} + \sum \limits_{t=1}^T{\hat{L}_t(\pa^*)} +  \frac{\eta}{2}  \sum \limits_{t=1}^T {\!\sum \limits_{\pa \in \pset}{x_t(\pa) (\hat{L}_t(\pa))^2}}. \label{be4exp}
\end{align}

Now, we take $\mathbb{E}_t$ on both sides of \eqref{be4exp}, then we apply \eqref{estiloss} to obtain:
\begin{align}
 & \sum \limits_{t=1}^T{\sum \limits_{\pa \in \pset} {x_t(\pa) \mathbb{E}_t[\hat{L}_t(\pa)]} }  \nonumber\\
    \le &  \frac{\ln(P)}{\eta} \! + \!\sum \limits_{t=1}^T{{L}_t(\pa^*)} \! + \! \frac{\eta}{2}\sum \limits_{t=1}^T{\sum \limits_{\pa \in \pset}{x_t(\pa) \mathbb{E}_t[\hat{L}_t(\pa)^2]}}. \label{afterexp}
\end{align}

Now, we look for a lower bound of \mbox{$\sum \nolimits_{\pa \in \pset} {x_t(\pa) \mathbb{E}_t\left[\hat{L}_t(\pa)\right]}$}.
For any fixed \mbox{$\pa \in \pset$}, we consider:
\begin{align}
\mathbb{E}_t\left[\sum \limits_{e \in \pa} \hLoss_t(e) \right] = & \sum \limits_{\patil \in \pset} {\left[ x_t(\patil) \sum \limits_{e\in \pa} \left( \frac{\Loss_t(e)}{q_t(e)\!+\!\beta} \mathbb{I}_{\left\{e \in \ob_t(\patil) \right\}}\right) \right]} \nonumber \\
= & \sum \limits_{e\in \pa} {\sum \limits_{\patil \in \mathbb{O}(e)}{x_t(\patil) \frac{\Loss_t(e)}{q_t(e) + \beta}}} \nonumber \\
= &  \sum \limits_{e \in \pa} {\frac{q_t(e) \Loss_t(e)}{q_t(e) + \beta}}. \label{lowbound1}
\end{align}

Using \eqref{lowbound1} and recalling that \mbox{$\Loss_t(e) \le 1, \forall e\in\eset$}, we~have:
\begin{align}
    & \sum \limits_{\pa \in \pset} {x_t(\pa) \mathbb{E}_t\left[\hat{L}_t(\pa)\right]} - \sum \limits_{\pa \in \pset} {x_t(\pa) L_t(\pa)}  \nonumber\\
    = & \sum \limits_{\pa \in \pset} {x_t(\pa) \sum \limits_{e \in \pa} {\frac{q_t(e) \Loss_t(e)}{q_t(e) + \beta}}}  - \sum \limits_{\pa \in \pset} {x_t(\pa) \sum \limits_{e \in \pa}{ \Loss_t(e)}} \nonumber \\
    = & \sum \limits_{\pa \in \pset} {x_t(\pa) \sum \limits_{e \in \pa} {\Loss_t(e)\left( \frac{q_t(e)}{q_t(e) + \beta} -1\right)}} \nonumber \\
    \ge & - \sum \limits_{\pa \in \pset} {x_t(\pa) \sum \limits_{e \in \pa} { \frac{\beta}{q_t(e) + \beta}}} \nonumber \\
    = & - \beta \sum \limits_{e \in \eset} { \frac{\sum \limits_{\pa \ni e} {x_t(\pa)}}{q_t(e) + \beta}} \nonumber \\
    = & - \beta Q_t . \label{lowerbound}
\end{align}
Therefore, a lower bound of \mbox{$\sum \nolimits_{\pa \in \pset} {x_t(\pa) \mathbb{E}_t\left[\hat{L}_t(\pa)\right]}$} is \mbox{$\sum \nolimits_{\pa \in \pset} {x_t(\pa) L_t(\pa)} - \beta Q_t  $}.

Now, we look for an upper bound of \mbox{$\sum \nolimits_{\pa \in \pset}{x_t(\pa) \mathbb{E}_t\left[\hat{L}_t(\pa)^2\right]}$}. To do this, fix $\pa \in \pset$, we~consider 
\begin{align}
    & \mathbb{E}_t \left[\hat{L}_t(\pa ) ^2 \right]  \nonumber \\
    = & \mathbb{E}_t \left[\left(\sum \nolimits_{e\in \pa}{\hLoss_t(e)} \right)^2  \right]  \nonumber \\
    \le & n \cdot  \mathbb{E}_t \left[\sum \nolimits_{e\in \pa}{\hLoss_t(e)^2}   \right] \nonumber \\
    = & n \cdot  \sum \limits_{\patil \in \pset} {\left[x_t(\patil) \sum \limits_{e\in \pa}{ \left(\frac{\Loss_t(e)}{q_t(e)+ \beta}   \mathbb{I}_{\{e \in \ob_t(\patil)\}} \right)^2 } \right] }  \nonumber \\
    \le & n \cdot  \sum \limits_{e \in \pa} {\sum \limits_{\patil \in \ob_t(e)} {x_t(\patil) \frac{1}{(q_t(e)+ \beta)^2}  } } \nonumber \\
    = & n \cdot  \sum \limits_{e \in \pa} { q_t(e) \frac{1}{(q_t(e) + \beta)^2}  } \nonumber \\
    \le & n \cdot  \sum \limits_{e \in \pa} {  \frac{1}{q_t(e) + \beta}  } .
    \label{upperbound1}
\end{align}
The first inequality comes from applying Cauchy–Schwarz inequality. The second inequality comes from the fact that \mbox{$\Loss_t(e)\le 1$} and the last inequality comes from \mbox{$q_t(e) \le q_t(e)+ \beta$ since $\beta>0$}. 

Now, applying \eqref{upperbound1}, we can bound
\begin{align}
\sum \limits_{\pa \in \pset}{x_t(\pa) \mathbb{E}_t\left[\hat{L}_t(\pa)^2\right]} \le & n \cdot \sum \limits_{\pa \in \pset}{x_t(\pa) \sum \limits_{e \in \pa} {  \frac{1}{q_t(e) + \beta}  } } \nonumber \\
= & n \cdot \sum \limits_{e\in \eset}{ \sum \limits_{\pa \ni e} x_t(\pa) {\frac{1}{q_t(e)+\beta} }} \nonumber \\
= & n \cdot \sum \limits_{e\in \eset} {\frac{r_t(e)}{q_t(e)+\beta}}
= n \cdot Q_t.\label{upperbound}
\end{align}
Here, we recall the notation $r_t(e)$ and $Q_t$ defined in Section~\ref{sec:OEPerform}. Replacing \eqref{lowerbound} and \eqref{upperbound} into \eqref{afterexp}, we have that the following inequality holds for any \mbox{$\pa^* \in \pset$}.
\begin{align}
    & \sum \limits_{t=1}^T{\sum \limits_{\pa\in \pset}{x_t(\pa) L_t(\pa)}}  - \sum\limits_{t=1}^T{\beta Q_t} - \sum \limits_{t=1}^T{L_t(\pa^*)}  \nonumber\\
    \le  & \frac{\ln(P)}{\eta}\!+\! \frac{\eta}{2} \sum \limits_{t=1}^T{n Q_t}\nonumber.
\end{align}
Therefore, we conclude that 
\begin{align*}
    R_T & =  \sum \limits_{t=1}^T{\sum \limits_{\pa\in \pset}{x_t(\pa) L_t(\pa)}}   - \sum \limits_{t=1}^T{L_t(\pa^*)} \\
    & \le  \frac{\ln(P)}{\eta} + \sum \limits_{t=1}^T{Q_t\left(n\frac{\eta}{2} + \beta \right) }.
\end{align*}

\end{proof}

\section{Lemmas on Graphs' Independence Numbers}
\label{graphlemma}
In this section, we present some lemmas in graph theory that will be used in the next section to prove Theorem~\ref{theoremQ}. Consider a graph $\tilde{G}$ whose vertices set and edges set are respectively denoted by $\tilde{\mathcal{V}}$ and~$\tilde{\eset}$. Let~$\tilde{\alpha}$ be its independence~number.
\begin{lemma}
\label{lemG1}
Let $\tilde{G}$ be an directed graph and $I_{v}$ be the in-degree of the vertex $v \in \tilde{\mathcal{V}}$, then 
\begin{equation}
\sum \nolimits_{v \in \tilde{\mathcal{V}}}\left[{1} / {(1+I_{v})}\right] \le 2 {\tilde{\alpha}}\ln\left(1+ {|\tilde{\mathcal{V}}|}/{{\tilde{\alpha}}} \right).\nonumber
\end{equation}
\end{lemma}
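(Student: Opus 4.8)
The plan is to prove this by a greedy peeling argument on vertices ordered by in-degree, a technique standard in the side-observation bandit literature (Alon et al., Mannor--Shamir, Kocák et al.). First I would relabel the vertices of $\tilde{G}$ as $v_1, v_2, \ldots, v_{|\tilde{\mathcal{V}}|}$ in nonincreasing order of in-degree, so $I_{v_1} \ge I_{v_2} \ge \cdots$. The key structural observation is that for each index $j$, among the vertices $v_1, \ldots, v_j$, any independent set has size at most $\tilde\alpha$; combining this with a counting of edges leaving these first $j$ vertices (each of in-degree at least $I_{v_j}$ among this prefix, up to the usual off-by-one from self-loops/direction) gives a bound of the form $1 + I_{v_j} \ge j/\tilde\alpha$ — more precisely, after accounting for the constant, $1 + I_{v_j} \ge \lceil j/\tilde\alpha \rceil$ or a comparable inequality. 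This is the crux and the step I expect to be the main obstacle: making the edge-counting argument airtight for \emph{directed} graphs (recall the independence number here ignores edge directions) and getting the constants to line up so that the final bound carries the factor $2$ rather than something worse.

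Granting that per-index inequality, the remainder is a short calculation. I would write
\begin{equation*}
\sum_{v \in \tilde{\mathcal{V}}} \frac{1}{1 + I_v} = \sum_{j=1}^{|\tilde{\mathcal{V}}|} \frac{1}{1 + I_{v_j}} \le \sum_{j=1}^{|\tilde{\mathcal{V}}|} \frac{\tilde\alpha}{j} \le \tilde\alpha \left( 1 + \ln |\tilde{\mathcal{V}}| \right),
\end{equation*}
and then sharpen the harmonic-sum estimate: grouping the indices $j$ into blocks of size $\tilde\alpha$ (indices $1$ through $\tilde\alpha$, then $\tilde\alpha+1$ through $2\tilde\alpha$, and so on) and bounding $1 + I_{v_j}$ below by (block index) within each block, one gets a sum dominated by $\tilde\alpha \sum_{\ell} \frac{1}{\ell}$ over roughly $|\tilde{\mathcal{V}}|/\tilde\alpha$ blocks, which integrates to $\tilde\alpha \ln\!\left(1 + |\tilde{\mathcal{V}}|/\tilde\alpha\right)$. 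Carrying the constant $2$ through this blockwise comparison (to absorb the within-block slack and the $+1$ terms) yields exactly $2\tilde\alpha \ln\!\left(1 + |\tilde{\mathcal{V}}|/\tilde\alpha\right)$.

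Alternatively, if the direct peeling bound proves awkward, I would fall back on the cleaner formulation used by Kocák et al.: repeatedly extract a maximal independent set from the remaining graph, observe that each such set has size at most $\tilde\alpha$ and that after removing $r$ such sets every surviving vertex has in-degree at least $r$ (since it must be dominated by a vertex in each extracted set), so the contribution of the $(r+1)$-th extracted batch to the sum is at most $\tilde\alpha/(1+r)$; summing over $r$ from $0$ to at most $\lceil |\tilde{\mathcal{V}}|/\tilde\alpha \rceil - 1$ and comparing with $\int_0^{|\tilde{\mathcal{V}}|/\tilde\alpha} \frac{dx}{1+x}$ gives the stated bound. Either route reduces the lemma to (i) one combinatorial inequality about independence numbers and in-degrees, and (ii) a monotone-function / integral comparison; step (i) is where the real work lies, and I would present it first.
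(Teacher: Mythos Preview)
The paper does not prove this lemma; it defers entirely to Lemma~10 of Alon et al.\ (2013), so there is no in-paper argument to compare against. Both of your sketched routes, however, contain a genuine gap.

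For the first route, the per-index inequality $1+I_{v_j}\ge j/\tilde{\alpha}$ (with the $v_j$ sorted by nonincreasing in-degree) is simply false. Take three vertices with only the directed edges $v_2\to v_1$ and $v_3\to v_1$: then $\tilde{\alpha}=2$, the sorted in-degrees are $2,0,0$, and at $j=3$ one would need $1\ge 3/2$. The blockwise refinement does not rescue this, since the underlying pointwise bound already fails; you correctly flagged this step as the obstacle, and indeed it does not go through.

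For the second route, the claim ``after removing $r$ maximal independent sets every surviving vertex has in-degree at least $r$'' conflates undirected domination with in-domination. A maximal independent set $S$ in the underlying undirected graph guarantees each $v\notin S$ an \emph{undirected} neighbour in $S$, but that edge may be oriented \emph{out} of $v$ and contribute nothing to $I_v$. The same three-vertex example breaks the argument: extract the maximal independent set $\{v_1\}$ first, and the survivors $v_2,v_3$ still have in-degree $0$. More fundamentally, a digraph need not possess any independent set that in-dominates its complement (the directed $3$-cycle is the standard obstruction), so this is not merely a matter of choosing the maximal independent set more carefully.

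Since the paper supplies nothing beyond the citation, you should consult the Alon et al.\ source directly; their greedy removes a minimum-in-degree vertex together with its in-neighbourhood at each step, and the link to $\tilde{\alpha}$ comes from a separate counting argument that neither of your sketches captures.
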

%
A proof of this lemma can be found in Lemma 10 of \cite{alon13}.
\begin{lemma}
\label{lemG2}
Let $\tilde{G}$ be a directed graph with self-loops and consider the numbers \mbox{$k(v)\in [0,1],\forall v\in \tilde{\mathcal{V}}$} such that there exists $\gamma>0$ and $\sum \nolimits_{v\in \tilde{\mathcal{V}}} {k(v)} \le \gamma$. For any~$c>0$,~we have
\begin{equation*}
\sum \limits_{v\in \tilde{\mathcal{V}}}{ \frac{k(v)}{\frac{1}{\gamma}\!\sum \limits_{v^{\prime} \rightarrow v}\!{k(v^{\prime})\!+\! c} }} \le 2\gamma\tilde{\alpha}\ln\!\left(\!1\!+\!\frac{\gamma\lceil|\tilde{\mathcal{V}}|^2/c \rceil +  |\tilde{\vset}| }{\tilde{\alpha}}\! \right)\! +\! 2\gamma.
\end{equation*}
\end{lemma}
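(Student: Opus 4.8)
The strategy is to deduce the bound from Lemma~\ref{lemG1} after passing to a suitable ``blown-up'' graph obtained by discretizing the weights $k(v)$. First I would fix a discretization level $M := \lceil |\tilde{\mathcal{V}}|^2/c\rceil$ and, for each $v\in\tilde{\mathcal{V}}$, round up: $\hat k(v) := \lceil M k(v)\rceil / M$, so that $k(v)\le \hat k(v)\le k(v)+1/M$, and $m_v := M\hat k(v) = \lceil M k(v)\rceil$ is a nonnegative integer that vanishes exactly when $k(v)=0$; note $\hat k(v)\in[0,1]$. Summing, $\sum_{v}m_v \le M\sum_v k(v) + |\tilde{\mathcal{V}}| \le \gamma M + |\tilde{\mathcal{V}}|$, which is precisely the quantity that should appear inside the logarithm.

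Next I would build a directed graph $\hat G$ with vertex set $\hat{\mathcal{V}}$ by replacing each $v$ with $m_v$ copies, joining the copies of any fixed $v$ into a directed clique with self-loops, and inserting all edges $u^{(i)}\to v^{(j)}$ between copies whenever $u\to v$ in $\tilde G$. Then $|\hat{\mathcal{V}}| = \sum_v m_v \le \gamma M + |\tilde{\mathcal{V}}|$, and since the copies of a fixed vertex form a clique, any independent set of $\hat G$ (directions ignored) uses at most one copy per original vertex and projects to an independent set of $\tilde G$, so the independence number of $\hat G$ is at most $\tilde\alpha$. A direct count shows that for a copy $v^{(j)}$ one has $1+I_{v^{(j)}}=\sum_{u\to v}m_u$ (the copies of the $\tilde G$-in-neighbours of $v$, including $v$ itself via the self-loop). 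Hence, applying Lemma~\ref{lemG1} to $\hat G$ and using that $x\mapsto 2x\ln(1+A/x)$ is nondecreasing in $x$ to replace the independence number of $\hat G$ by $\tilde\alpha$,
\[
\sum_{v\in\tilde{\mathcal{V}}}\frac{m_v}{\sum_{u\to v}m_u}
= \sum_{w\in\hat{\mathcal{V}}}\frac{1}{1+I_w}
\le 2\tilde\alpha\ln\!\left(1+\frac{\gamma M + |\tilde{\mathcal{V}}|}{\tilde\alpha}\right),
\]
where the terms with $m_v=0$ are vacuous.

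It then remains to compare the left-hand side of the lemma with this sum term by term. For $v$ with $k(v)>0$ (hence $m_v\ge 1$ and $\sum_{u\to v}m_u\ge 1$), bound the numerator by $k(v)\le \hat k(v)=m_v/M$ and lower-bound the denominator by discretizing the neighbours' masses, $\tfrac1\gamma\sum_{u\to v}k(u)+c \ge \tfrac1\gamma\sum_{u\to v}\hat k(u) - \tfrac{|\tilde{\mathcal{V}}|}{\gamma M} + c$; the choice $M\ge |\tilde{\mathcal{V}}|^2/c$ keeps the rounding error $|\tilde{\mathcal{V}}|/(\gamma M)$ under control, so up to a constant each such term is at most $\gamma\,m_v/\sum_{u\to v}m_u$, and summing together with the displayed inequality produces the main term $2\gamma\tilde\alpha\ln\!\big(1+(\gamma M+|\tilde{\mathcal{V}}|)/\tilde\alpha\big)$. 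The residual additive $2\gamma$ absorbs the small set of vertices on which the discretization error would otherwise dominate; these are handled crudely via the self-loop estimate $\tfrac{k(v)}{\frac1\gamma\sum_{u'\to v}k(u')+c}\le \tfrac{k(v)}{\frac1\gamma k(v)}=\gamma$. I expect the main obstacle to be exactly this last bookkeeping step: making the split of $\tilde{\mathcal{V}}$ explicit and verifying that with $M=\lceil|\tilde{\mathcal{V}}|^2/c\rceil$ the rounding errors collapse into the stated additive $2\gamma$ without inflating the logarithmic factor — the graph blow-up and the appeal to Lemma~\ref{lemG1} being routine once the discretization is set up correctly.
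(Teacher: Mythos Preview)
Your proposal is correct and follows the standard discretization/blow-up argument; the paper does not give its own proof of this lemma but simply cites Lemma~1 of Koc\'ak et al.\ (2014), and your sketch is essentially that argument. In fact the paper deploys exactly this machinery---rounding the weights at level $M$, replacing each vertex by a clique of the appropriate size, reading off the in-degrees as $\sum_{u\to v}m_u - 1$, and invoking Lemma~\ref{lemG1}---in its own proof of Case~2.1 of Theorem~\ref{theoremQ}, together with the algebraic inequality $\frac{a}{a+b-A}\le\frac{a}{a+b}+\frac{A}{B-A}$ (for $a+b\ge B>A>0$) to peel off the rounding error; that inequality is precisely the tool you need to turn your ``up to a constant'' into the clean additive $2\gamma$ in the final bookkeeping step you flagged.
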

%
A proof of this lemma can be found in Lemma 1 of \cite{kocak14}.
\begin{lemma}
\label{lemG3}
Let $\tilde{G}$ be an undirected graph with self-loops and consider the numbers \mbox{$k(v) \ge  0$, $v \in  \tilde{\mathcal{V}}$}. We have
\begin{equation*}
\sum \nolimits_{v\in \tilde{\mathcal{V}}} \left[{k(v)}\big/{\sum \nolimits_{v^{\prime} \rightarrow v}\!{k(v^{\prime})} }\right] \le \tilde{\alpha}. 
\end{equation*}
\end{lemma}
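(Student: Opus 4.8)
The plan is to prove the inequality by greedily constructing an independent set of $\tilde G$ whose size dominates the left-hand side. First I would reduce to the case $k(v)>0$ for all $v\in\tilde{\mathcal V}$: if $k(v_0)=0$ for some vertex, its term in the sum is $0$ (reading $0/0$ as $0$), and since that weight contributes nothing to the denominator $\sum_{w\to u}k(w)$ of any other vertex $u$, deleting $v_0$ from $\tilde G$ leaves the left-hand side unchanged while the independence number of the induced subgraph is at most $\tilde\alpha$. Hence it suffices to argue on the subgraph spanned by the vertices carrying positive weight, and we may assume all $k(v)>0$.

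Next I would run the following greedy procedure, writing $N(v)$ for the closed neighbourhood of $v$ (which contains $v$ itself because $\tilde G$ has self-loops). Set $U_1:=\tilde{\mathcal V}$; while $U_i\neq\emptyset$, choose $v_i\in U_i$ minimizing the restricted weighted degree $D(v):=\sum_{w\in N(v)\cap U_i}k(w)$, put $v_i$ into $I$, let $B_i:=N(v_i)\cap U_i$ and $D_i:=D(v_i)$, and set $U_{i+1}:=U_i\setminus B_i$. Say the procedure runs for $m$ steps. I would record two facts. (a) $I=\{v_1,\dots,v_m\}$ is an independent set, hence $m\le\tilde\alpha$: when $v_i$ is chosen, every vertex of $U_i$ adjacent to $v_i$ is deleted, so for $j>i$ we have $v_j\in U_j\subseteq U_{i+1}$ and therefore $v_j\notin N(v_i)$ (undirectedness makes "adjacent to $v_i$'' symmetric). (b) $B_1,\dots,B_m$ partition $\tilde{\mathcal V}$, since every vertex is removed from $U$ exactly once and the loop terminates only when $U$ is empty.

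Then I would bound the sum batch by batch. Fix $i$ and $u\in B_i$; since $u\in U_i$, nonnegativity of the weights gives $\sum_{w\to u}k(w)\ge\sum_{w\in N(u)\cap U_i}k(w)$, and the minimality defining $v_i$ gives $\sum_{w\in N(u)\cap U_i}k(w)\ge D_i$. Summing over $u\in B_i$ yields $\sum_{u\in B_i}\frac{k(u)}{\sum_{w\to u}k(w)}\le\frac{1}{D_i}\sum_{u\in B_i}k(u)=\frac{D_i}{D_i}=1$, where $D_i>0$ because $v_i\in N(v_i)\cap U_i$ and $k(v_i)>0$. Summing over $i=1,\dots,m$ and invoking the partition property from (b), together with $m\le\tilde\alpha$ from (a), gives $\sum_{v\in\tilde{\mathcal V}}\frac{k(v)}{\sum_{w\to v}k(w)}\le m\le\tilde\alpha$, as claimed.

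I expect the only delicate point to be the two-step lower bound on the true denominator $\sum_{w\to u}k(w)$ by $D_i$: one must use simultaneously that restricting the neighbourhood from $N(u)$ to $N(u)\cap U_i$ can only lose weight, and that $v_i$ was selected to have minimum restricted weighted degree among vertices still in $U_i$. It is precisely the undirectedness (so that $u\in N(v_i)\iff v_i\in N(u)$) and the self-loops (guaranteeing $v_i\in N(v_i)\cap U_i$ and hence $D_i>0$) that make both halves of this argument valid; without them the bound $\tilde\alpha$ would have to be replaced by the weaker logarithmic bounds of Lemmas~\ref{lemG1}–\ref{lemG2}.
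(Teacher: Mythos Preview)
Your proof is correct. The paper does not give its own argument for this lemma but simply cites Lemma~3 of \cite{mannor11}; the greedy peeling procedure you wrote out is precisely the standard proof from that reference, so your proposal fills in exactly what the paper omits.
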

%
This lemma is extracted from Lemma $3$ of \cite{mannor11}.

\section{Proof of Theorem~\ref{theoremQ}}
\theoremQ*
\label{sec:AppenTheo2}
\textbf{{Case 1:}} \emph{$G^O_t$ does not satisfy assumption~$(A0)$}. Fixing an edge $e$, due to the fact that $n$ is the length of the longest paths in $\pset$, we have
\begin{align}
& n q_t(e) \! =\! n\!\!\sum \limits_{ \pa \in \ob_t(e)}{x_t(\pa)} \ge \sum \limits_{e^{\prime} \rightarrow e}{ \sum \limits_{\pa \ni e^{\prime}}{x_t(\pa)}\!}\!=\! \sum \limits_{e^{\prime} \rightarrow e}{r_t(e^{\prime})} \nonumber \\
\Rightarrow & Q_t \!=\!\sum \limits_{e\in \eset} {\frac{r_t(e)}{q_t(e)\!+\!\beta} \!} \le  \sum \limits_{e\in \eset} {\frac{r_t(e)}{\frac{1}{n}\! \sum \limits_{e^{\prime} \rightarrow e}{r_t(e^{\prime})} \! + \! \beta} }. \label{Qtbound}
\end{align}
%
%
\emph{Case 1.1:} If $G^O_t$ is a non-symmetric (i.e., directed) graph, we apply Lemma \ref{lemG2} with $\gamma =n, c= \beta$ on the graph $\tilde{G} = G^O_t$ (whose vertices set $\tilde{\mathcal{V}}$ corresponds to the edges set $\eset$ of $G$) and the numbers\footnote{We verify that these numbers satisfy
\begin{equation*}
\sum_{e\in \eset}{r_t(e)}\! =\! \sum_{e\in \eset}{\!\sum \limits_{\pa \ni e}x_t(\pa)} \! = \! \sum_{\pa \in \pset} {\!\sum_{e\in \pa}{x_t(\pa)}}\!\le\! \! \sum_{\pa \in \pset} {n{x_t(\pa)}} = n.
\end{equation*}
}
\mbox{$ k(v_e) = {r}_t(e), \forall v_e \in \tilde{\vset}$} (i.e., $\forall e \in \eset$). 
We obtain the following inequality:
\begin{equation*}
 \sum \limits_{e\in \eset} { \frac{r_t(e)}{\frac{1}{n}\!\sum \limits_{e^{\prime} \rightarrow e}{r_t(e^{\prime})} \! + \! \beta}  }  \le 2n\alpha_t\ln\left(1\!+\!\frac{n\lceil E ^2/\beta \rceil\! +\! E}{\alpha_t} \right) + 2n.
\end{equation*}
\emph{Case 1.2:} If $G^O_t$ is a symmetric (i.e. undirected) graph, we apply Lemma~\ref{lemG3} with the graph \mbox{$\tilde{G} = G^O_t$} (whose vertices set $\tilde{\mathcal{V}}$ corresponds to the edges set $\eset$ of the graph $G$) and the numbers \mbox{$k(v_e) = r_t(e), \forall v_e \in \tilde{V}$} (i.e., $\forall e \in \eset$) to obtain:
\begin{equation*}
\sum \limits_{e\in \eset} {\frac{r_t(e)}{\frac{1}{n}\sum \limits_{e^{\prime} \rightarrow e}{r_t(e^{\prime})\!+\!\beta}} }  \le n \sum \limits_{e\in \eset} {\frac{r_t(e)}{\sum \limits_{e^{\prime} \rightarrow e}{r_t(e^{\prime})}} }  \le n \alpha_t. \label{on}
\end{equation*}
\textbf{{Case 2:}} \emph{$G^O_t$ satisfies assumption $(A0)$}. 
%
Under this assumption, $q_t(e) = \sum \nolimits_{e^{\prime}\rightarrow e}{r_t(e^{\prime})}$ due to the definition of~\mbox{$\ob_t(e)$}. Therefore,
\mbox{$Q_t =  \sum \nolimits_{e\in \eset} {\left[{r_t(e)}\big/\left({\sum \nolimits_{e^{\prime} \rightarrow e}{r_t(e^{\prime})}  +\beta} \right) \right]}$}.

\emph{Case 2.1:} If $G^O_t$ is a non-symmetric (i.e., directed) graph. We consider a discretized version of $x_t(\pa)$ for any path $\pa \in \pset$ that is $\tilde{x}_t(\pa) := k/M$ where $k$ is the unique integer such that \mbox{$(k-1)/M \le x_t(\pa) \le k/M $}; thus, \mbox{$\tilde{x}_t(\pa) - 1/M \le x_t(\pa) \le \tilde{x}_t(\pa)$}.

Let us denote the discretized version of $r_t(e)$ by \mbox{$\tilde{r}_t(e) := \sum \nolimits_{\pa \ni e}{\tilde{x}_t(\pa)}$}. We deduce that \mbox{$r_t(e) \le \tilde{r}_t(e)$}~and 
\begin{equation*}
\sum \limits_{e^{\prime} \rightarrow e}{{r}_t(e)} \ge \sum \limits_{e^{\prime} \rightarrow e}{\left( \tilde{r}_t(e^{\prime}) - \frac{1}{M} \right)}
\ge \sum \limits_{e^{\prime} \rightarrow e}{\tilde{r}_t(e^{\prime})} - \frac{ E }{M}.
\end{equation*}
We obtain the bound:
\begin{align}
Q_t =  \sum \limits_{e\in \eset} \frac{r_t(e)}{ \left({\sum \limits_{e^{\prime} \rightarrow e}{r_t(e^{\prime})}  +\beta} \right)} \le  \sum \limits_{e\in \eset} {\frac{\tilde{r}_t(e)}{ \sum \limits_{e^{\prime} \rightarrow e} {\tilde{r}_t(e^{\prime})}\!- \! E/M\!+\! \beta} } .
\label{be4inequ}
\end{align}
We now consider the following inequality: If $a,b \ge 0$ and \mbox{$a+b \ge B > A >0$}, then 
\begin{equation}
\frac{a}{a+b-A} \le \frac{a}{a+b} + \frac{A}{B-A}. \label{inequ}
\end{equation}
A proof of this inequality can be found in Lemma~12 of \cite{alon13}. Applying~\eqref{inequ}\footnote{Trivially, we can verify that $a+b \ge B$ and $B > A$ comes from the fact that 
$ \beta \ge \beta \frac{ 1 }{E} > \frac{ E }{ \lceil 2 E^2 / \beta \rceil}$.
}
 with $a = \tilde{r}_t(e)$, \mbox{$b=\sum \limits_{e^{\prime} \rightarrow e, e^{\prime} \neq e}{\tilde{r}_t(e^{\prime})} \! + \! \beta$}, \mbox{$A=  \frac{E}{M}$}, and \mbox{$B=\beta$} to \eqref{be4inequ},
\begin{align}
    Q_t \le  & \sum \limits_{e\in \eset} {\left(\frac{\tilde{r}_t(e)}{\sum \limits_{e^{\prime} \rightarrow e} {\tilde{r}_t(e^{\prime})} + \beta} + \frac{ E /M}{\beta -  E /M}  \right) } \nonumber\\
    \le & \sum \limits_{e\in \eset} {\frac{\tilde{r}_t(e)} {\sum \limits_{e^{\prime} \rightarrow e}\tilde{r}_t(e^{\prime})}} + 1. \label{afterinequ}
\end{align}
The last inequality comes from the fact that \mbox{$\frac{ E }{M\beta -  E } \le \frac{ E }{2 E ^2-  E } \le \frac{1}{2 E -1} \le \frac{1}{E}, \forall E \ge 1$}. 

Finally, we create an auxiliary graph ${G}^*_t$ such that:
\begin{trivlist}
    \item[$(i)$] Corresponding to each edge $e$ in $G$ (i.e., each vertex $v_e$ in $G^O_t$), there is a clique, called $\mathbb{C}(e)$, in the auxiliary graph ${G}^*_t$ with $M\tilde{r}_t(e)\in \mathbb{N}$ vertices.
    \item[$(ii)$] In each clique $\mathbb{C}(e)$ of ${G}^*_t$, all vertices are pairwise connected with length-two cycles. That is, for any \mbox{$k, k^{\prime} \in \mathbb{C}(e)$}, there is an edge from $k$ to $k^{\prime}$ and there is an edge from $k^{\prime}$ to $k$ in ${G}^*_t$.
    \item[$(iii)$] If \mbox{$e \rightarrow e^{\prime}$}, i.e., there is an edge in $G^O_t$ connecting $v_e$ and $v_{e^{\prime}}$; then in ${G}^*_t$, all vertices in the clique $\mathbb{C}(e)$ are connected to all vertices in $\mathbb{C}(e^{\prime})$.
\end{trivlist}

We observe that the independence number $\alpha_t$ of $G^O_t$ is equal to the independence number of ${G}^*_t$. Moreover, the in-degree of each vertex $k \in \mathbb(e)$ in the graph ${G}^*_t$ is:
\begin{equation}
{I}^*_k = M\tilde{r}_t(e)\! - \!1 \!+\! \sum \limits_{e^{\prime} \rightarrow e, e^{\prime} \neq e}{M \tilde{r}_t(e^{\prime})}\! = \!\sum \limits_{e^{\prime} \rightarrow e} \!{M\tilde{r}_t(e^{\prime}) }\! -\!1. \label{blabla}
\end{equation}
Let us denote ${V}^*_t$ the set of all vertices in ${G}^*_t$, then we have:
\begin{align}
& \sum \limits_{e \in \eset}\frac{\tilde{r}_t(e)} {\sum \limits_{e^{\prime} \rightarrow e} {\tilde{r}_t(e^{\prime}) }}\! =\! \sum \limits_{e \in \eset}\frac{M\tilde{r}_t(e)} {\sum \limits_{e^{\prime} \rightarrow e} {M\tilde{r}_t(e^{\prime}) }} \! =\! \sum \limits_{e \in \eset} {\sum \limits_{k\in \mathbb{C}(e)} {\frac{1}{{I}^*_k \!+ \! 1}}}\nonumber \\
	= &  \sum \limits_{k \in {V}^*_t} \frac{1}{\tilde{I}_k + 1} \le 2\alpha_t \ln\left( 1+ \frac{M+ E }{\alpha_t}\right).\label{finalstep}
\end{align}
Here, the second equality comes from the fact that $|\mathbb{C}(e)| = M \tilde{r}_t(e)$ and \eqref{blabla}. The inequality is obtained by applying Lemma~\ref{lemG1} to the graph ${G}^*_t$ and the fact that \mbox{$|{V}^*_t| = \sum \nolimits_{e\in \eset}{M\tilde{r}_t(e)} \le  M \sum \nolimits_{e\in \eset}{({r}_t(e)\! +\! 1/M)} \! \le \!  E\! +\! M $}.

In conclusion, combining \eqref{afterinequ} and \eqref{finalstep}, we obtain the regret-upper bound as given in Theorem~\ref{theoremQ} for this case of the observation graph.

\emph{Case 2.2:} Finally, if $G^O_t$ is a symmetric (i.e., undirected) graph, we again apply Lemma~\ref{lemG3} to the graph \mbox{$\tilde{G} = G^O_t$} and the numbers \mbox{$k(v_e) = r_t(e)$} to obtain that
\mbox{$Q_t \le  \sum \nolimits_{e\in \eset} {\left[{r_t(e)}\big/{\sum \nolimits_{e^{\prime} \rightarrow e}{r_t(e^{\prime})} } \right]} \le \alpha_t$}.\qed
\section{Parameters Tuning for \Algo/: Proof of Corollary~\ref{corrol1}}
\label{AppenTheo3}
In this section, we suggest a choice of $\beta$ and $\eta$ that guarantees the expected regret given in Corollary~\ref{corrol1}.
\coroll*
\textbf{Case 1: Non-symmetric (i.e. directed) observation graphs that do not satisfy assumption~$(A0)$}. We find the parameters $\beta$ and $\eta$ such that $R_t \le \tilde{\mO}\left(n\sqrt{T \alpha} \right)$. We note that \mbox{$\alpha_t \ge 1$}, $\forall t \in [T]$; therefore, recalling that $\alpha$ is an upper bound of $\alpha_t$, from Theorem~\ref{maintheo} and \ref{theoremQ}, we have:
\begin{align}
    R_T &  \le  \frac{\ln(P)}{\eta} \! + \! \sum_{t=1}^T{ \left(\! n\frac{\eta}{2}\!+\!\beta\! \right) 2n\left[1\!\!+\!{\alpha_{t}} \ln \! \left(1\! +\! \frac{nM\!+ \!E}{\alpha_t} \! \right) \!  \right]} \nonumber \\ 
    & \le \frac{\ln(P)}{\eta} \! + \! T{\left(n\frac{\eta}{2}\!+\!\beta \right) 2n \left[1+ \alpha\ln\left(\alpha \! +\!{nM}\! + \! E  \right) \right]} \nonumber \\
    & = \frac{\ln(P)}{\eta} + \eta T{n^2}\left[1+ \alpha\ln\left(\alpha +{nM}+  E   \right) \right] \nonumber\\
        & \qquad  + 2\beta Tn \left[1+ \alpha\ln\left(\alpha+{nM} +  E  \right) \right] \label{eq:tuning1}.
\end{align}
Recalling that $M:= \lceil{ 2 E^2/\beta }\rceil$, by choosing any 
\begin{align}
    \beta & \le {1}/{\sqrt{Tn[1+ \alpha \ln(\alpha+n \lceil { E ^2}/{\beta}\rceil+  E  )]}}, \label{eq:tune_beta1}\\
 \textrm{and }   \eta & = {\sqrt{\ln(P)}}/{\sqrt{n^2T\left[ 1+ \alpha  \ln\left(\alpha + n \lceil{ E ^2}/{\beta}  \rceil  +   E  \right) \right] }} \nonumber,
\end{align}
we obtain the bound:
\begin{align}
R_T  \le & 2 n\sqrt{T \ln(P) \cdot [1+ \alpha \ln(\alpha+nM+  E  )] }  \nonumber\\
    & \qquad + 2 \sqrt{ Tn[\alpha+ \alpha \ln(\alpha+nM+  E  )] } \label{eq:RTBOU} \\
\le & \tilde{\mO}\left(n \sqrt{T \alpha\ln(P)}  \right). \nonumber
\end{align}

In practice, as long as it satisfies ~\eqref{eq:tune_beta1}, the larger $\beta$ is, the better upper-bounds that \Algo/ gives. As an example that \eqref{eq:tune_beta1} always has at least one solution, we now prove that it holds with
\begin{equation}
\beta^* = \frac{-T n^2  E ^2\!+\! \sqrt{(Tn^2E^2)^2\! +\! 4Tn(1 \!+\! \alpha\ln{\alpha} 
\!+\!  E\!  +\!  n)}}{2Tn(1\!+\! \alpha\ln{\alpha} \! +\!  E\!  +\! n) }. \label{eq:beta^*}    
\end{equation} 
Indeed, $\beta^*>0$ and it satisfies: 
\begin{align*}
& {\beta^*}^{2}\cdot Tn (1+ \alpha \ln {\alpha} +  E  + n) + \beta^* Tn^2 E ^2 = 1. \\
\Rightarrow &  {\beta^*}^{2}\cdot Tn (1+ \alpha \ln {\alpha}+  E   ) + {\beta^*}^2 Tn^2\left(  \frac{ E ^2}{\beta^*} + 1 \right)= 1\\
\Rightarrow & {\beta^*}^{2}\cdot Tn (1+ \alpha \ln {\alpha} +  E  )  + {\beta^*}^2 Tn^2 \ceil[\Big]{\frac{ E ^2}{\beta^*}} \le 1 \\
\Rightarrow & \beta^* \le \frac{1}{\sqrt{Tn\left( 1 + \alpha \ln{\alpha} +  E  + nM \right)}}.
\end{align*}

On the other hand, applying the inequality $\ln(1+x) \le x$, $\forall x \ge 0$, we have:
\begin{align*}
& \frac{nM+  E  }{\alpha}  \ge \ln\left(1+ \frac{nM+  E  }{\alpha} \right) \\
 \Rightarrow & \frac{nM+  E  }{\alpha} + \ln\alpha \ge \ln(\alpha + nM+  E  ) \\
\Rightarrow &nM +  E  + \alpha \ln\alpha + 1 \ge \alpha \ln(\alpha+nM+  E  ) + 1 \\
\Rightarrow &  \frac{1}{\sqrt{Tn\left(\! 1\! +\! \alpha \! \ln{\alpha} \!+\! nM\! +\!\! E\!  \right)}} \! \le \!  \frac{1}{\sqrt{Tn\left(\! \alpha\! \ln{(\!\alpha\! +\! nM\!+\!  E\!  )\! +\!\! 1}  \!\right)}}.
\end{align*}

Therefore, $\beta^*$ satisfies \eqref{eq:tune_beta1}. Finally, note that with the choice of $\beta = \beta^* = \Omega \left(nE^2 / [1 \!+\! \alpha\ln \alpha \!+\! E\! +\! n] \right)$ as in~\eqref{eq:beta^*}, we have
\begin{equation*}
    M = \lceil 2 E^2 / \beta \rceil \le \mO([1 \!+\! \alpha\ln \alpha \!+\! E\! +\! n] / n).
\end{equation*}
Combining this with~\eqref{eq:RTBOU}, we obtain the regret bound indicated in Section~\ref{sec:OEPerform}.

%
%

\textbf{Case 2: symmetric observation graphs that do not satisfy $(A0)$}. Trivially, we have that if \mbox{$\beta:= 1/\sqrt{n \alpha T}$} and \mbox{ $\eta = 2 \sqrt{\ln(P)}/\sqrt{n^2 \alpha T}$}, then
\begin{align}
R_T &\le \frac{\ln(P)}{\eta} + \left( n \frac{\eta}{2} + \beta \right) n \alpha T \nonumber \\
& = \frac{1}{2} n \sqrt{\alpha T \ln(P)}  + n \sqrt{\alpha T \ln(P)} + \sqrt{n \alpha T} \label{RTHS1}\\
& \le \tilde{\mO} \left(n \sqrt{\alpha T \ln(P)} \right). \nonumber
\end{align}


\textbf{Case 3: non-symmetric observation graphs $G^O_t$ satisfying assumption $(A0)$, $\forall t$}. We will prove that \mbox{$R_T \le 
 \tilde{\mO}\left(\sqrt{nT\alpha \ln(P)} \right)$} for any
\begin{align}
     \beta & \le {1} / {\sqrt{T\alpha[1+ 2\ln \left( 1 + \lceil  E ^2/\beta \rceil +  E  \right)]}}, \label{{eq:beta2}} \\ 
     \eta & = {2\sqrt{\ln(P)}}/{\sqrt{Tn\alpha\left[1+ 2\ln\left({\alpha} + M +  E  \right)\right] }}. \label{{eq:eta2}}
\end{align}
 
Indeed, from Theorem \ref{maintheo} and~\ref{theoremQ}, we have:
\begin{align}
    R_T & \le \frac{\ln(P)}{\eta} + \sum \limits_{t=1}^T{\!\left(n\frac{\eta}{2}\!+\!\beta\! \right)\!\left[\!1\!+\! 2\alpha_t\ln\!\left(1\!+\!\frac{M \!+\! E }{\alpha_t}\! \right)\! \right]} \nonumber \\
    & \le \frac{\ln(P)}{\eta} + \sum \limits_{t=1}^T{\left(n\frac{\eta}{2}\!+\!\beta \right)\left[\alpha + 2\alpha\ln\left(1+ M+ E  \right) \right]} \nonumber \\
    & = \frac{\ln(P)}{\eta} + \eta T \alpha \frac{n}{2} \left[1+ 2\ln\left(1+M+ E  \right) \right]\nonumber \\
        & \qquad + \beta T \alpha \left[1+ 2\ln\left(1+M +  E  \right) \right]. \label{{eq:tuning2}} 
\end{align}
We replace \eqref{{eq:beta2}} and \eqref{{eq:eta2}} into \eqref{{eq:tuning2}} and obtain:
\begin{align}
R_T & \le \frac{3}{2} \sqrt{Tn\alpha\left[1+ 2\ln\left(1 + M +  E  \right)\right]\cdot \ln(P)} \nonumber\\
    & \qquad + \sqrt{T\alpha\left[1+ 2\ln\left(1 + M +  E  \right)\right]} \label{RTCB}. \\
& \le \tilde{\mO}\left( \sqrt{n \alpha T \ln(P)}\right). \nonumber
\end{align}
A choice for $\beta$ that satisfies~\eqref{{eq:beta2}} is
\begin{equation}
\beta^*: =\frac{  -T \alpha E^2\! +\! \sqrt{(T \alpha E^2)^2 \!+\! T \alpha (3+2E)} }{T \alpha (3 +2E)}. \label{eq:beta^*2}
\end{equation}
Moreover, with this choice of $\beta^* = \Omega(E^2/(3+2E))$, we can deduce that \mbox{$M:= \lceil 2E^2/ \beta^* \rceil \le \mO(3 + 2E)$}. Combining this with~\eqref{RTCB}, we obtain the regret bound indicated in Section~\ref{sec:OEPerform}.

\textbf{Case 4: all observation graphs are symmetric and satisfy  $(A0)$}. From Theorem~\ref{maintheo} and \ref{theoremQ}, we trivially have that if $\beta:= 1/\sqrt{\alpha T}$ and \mbox{ $\eta = 2 \sqrt{\ln(P)}/\sqrt{n \alpha T}$}, then \mbox{$R_T \le 2\sqrt{n \alpha T \ln(P)} + \sqrt{\alpha T} \le \tilde{\mO}\left( \sqrt{n \alpha T \ln(P)} \right)$}.


\section{Graphical Representation of the Games' Actions Sets}
\label{graph}
\subsection{The Actions Set of the Colonel Blotto Games}
We give a description of the graph corresponding to the actions set of the learner in the CB game who distributes $k$ troops to $n$ battlefields. 
\begin{definition}[CB Graph]
    The graph $G_{k,n}$ is a DAG that~contains:
    \begin{trivlist}
    \item[$(i)$] \mbox{$N:=2+(k+1)(n-1)$} vertices arranged into $n+1$ layers.  Layer $0$ and Layer $n$, each contains only one vertex, respectively labeled $s:=(0,0)$--the source vertex and $d:=(n,k)$--the destination vertex. Each Layer \mbox{$i \in [n-1]$} contains \mbox{$k+1$} vertices whose labels are ordered from left to right by \mbox{$(i,0),(i,1),\ldots,(i,k)$}.
    \item[$(ii)$] There are directed edges from vertex $(0,0)$ to every vertex in Layer $1$ and edges from every vertex in Layer $n-1$ to vertex $(n,k)$. For $i \in \{1,2,\ldots, n-2 \}$, there exists an edge connecting vertex $(i,j_1)$ (of Layer $i$) to vertex $(i+1,j_2)$ (of Layer $(i+1)$) if $k \ge j_2 \ge j_1 \ge 0$.
    \end{trivlist}
\end{definition}
Particularly, $G_{k,n}$ has {$E=\! (k\!+\!1)\left[4\! +\! (n\!-\!2)(\!k\!+\!2) \right]\!/2 = \mO(nk^2)$} edges and \mbox{$P =  \binom{n+k-1}{n-1} = \mO(2^{\min\{n-1,k\}})$} paths going from vertex $s:=(0,0)$ to vertex $d:=(k,n)$. The edge connecting vertex $(i,j_1)$ to vertex $(i+1,j_2)$ for any $i \in \{0,1,\ldots,n-1 \}$ represents allocating $(j_2 - j_1)$ troops to battlefield $i+1$. Moreover, each path from $s$ to $d$ represents a strategy in~$S_{k,n}$. This is formally stated in Proposition~\ref{Propolayer}.
\begin{proposition}
    \label{Propolayer}
    Given $k$ and $n$, there is a one-to-one mapping between the action set $S_{k,n}$ of the learner in the CB game (with $k$ troops and $n$ battlefields) and the set of all paths from vertex $s$ to vertex $d$ of the graph $G_{k,n}$.
\end{proposition}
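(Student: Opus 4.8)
The plan is to exhibit an explicit bijection between $s$--$d$ paths in $G_{k,n}$ and allocations in $S_{k,n}$, and to verify it is well defined in both directions. First I would use the layered structure of $G_{k,n}$: since every edge goes from Layer $i$ to Layer $i+1$, any path $\pa \in \pset$ traverses exactly one vertex per layer, so it can be written as $s=(0,j_0),(1,j_1),\ldots,(n-1,j_{n-1}),(n,j_n)=d$ with $j_0=0$ and $j_n=k$. The edge-admissibility condition $k\ge j_i\ge j_{i-1}\ge 0$ then forces $0=j_0\le j_1\le\cdots\le j_{n-1}\le j_n=k$; in particular each $j_i$ lies in $\{0,1,\ldots,k\}$, which is exactly the set of vertex labels of Layer $i$, so the description is consistent. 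Define $\Phi(\pa)\in\mathbb{N}^n$ by $\Phi(\pa)(i):=j_i-j_{i-1}$ for $i\in[n]$; this is precisely the allocation ``read off'' edge by edge, as described right after the definition of the CB graph (the edge $(i-1,j_{i-1})\to(i,j_i)$ represents allocating $j_i-j_{i-1}$ troops to battlefield~$i$).

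Next I would check $\Phi(\pa)\in S_{k,n}$: nonnegativity of each component $\Phi(\pa)(i)$ is exactly the monotonicity $j_i\ge j_{i-1}$, and $\sum_{i=1}^n\Phi(\pa)(i)=\sum_{i=1}^n(j_i-j_{i-1})=j_n-j_0=k$ by telescoping. Conversely, given $\boldsymbol{z}\in S_{k,n}$, I would set the partial sums $j_i:=\sum_{\ell=1}^i\boldsymbol{z}(\ell)$ with $j_0:=0$, so that $j_n=k$ and $0=j_0\le j_1\le\cdots\le j_n=k$; since $j_i\le k$ for every $i$, the vertex $(i,j_i)$ indeed exists in Layer $i$, and each consecutive pair $(i-1,j_{i-1})\to(i,j_i)$ satisfies $k\ge j_i\ge j_{i-1}\ge 0$, hence is an edge of $G_{k,n}$. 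Thus these vertices determine a path $\Psi(\boldsymbol{z})\in\pset$.

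Finally I would argue that $\Phi$ and $\Psi$ are mutually inverse, which is the telescoping/partial-sum tautology: the partial sums of the successive differences of $(j_i)_{i=0}^n$ return $(j_i)_{i=0}^n$, so $\Psi(\Phi(\pa))=\pa$; and the successive differences of the partial sums of $\boldsymbol{z}$ return $\boldsymbol{z}$, so $\Phi(\Psi(\boldsymbol{z}))=\boldsymbol{z}$. This establishes the claimed one-to-one correspondence. I do not expect a genuine obstacle here; the only step requiring a moment of care is verifying that the intermediate vertex labels $j_i$ always fall in the admissible range $\{0,\ldots,k\}$, which follows immediately from the monotonicity $j_i\le j_{i+1}$ together with $j_n=k$ (and from $j_i\ge j_{i-1}\ge\cdots\ge j_0=0$ for the lower bound).
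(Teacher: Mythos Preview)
Your proof is correct and is precisely the natural bijection the paper has in mind: the paper itself does not give a detailed argument but simply states that the proof ``is trivial and can be intuitively seen in Figure~1-(a),'' having already noted that the edge from $(i,j_1)$ to $(i+1,j_2)$ represents allocating $j_2-j_1$ troops to battlefield $i+1$. Your explicit partial-sums/successive-differences construction makes this rigorous and there is nothing to add.
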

The proof of this proposition is trivial and can be intuitively seen in Figure \ref{fig1}-(a). We note that a similar graph is studied by \cite{Behnezhad17a}; however, it is used for a completely different purpose and it also contains more edges and paths than $G_{k,n}$ (that are not useful in this work).

\subsection{The Actions Set of the Hide-and-Seek game}
We give a description of the graph corresponding to the actions set of the learner in the HS games with the $n$-search among $k$ locations and coherence constraints  \mbox{$|\boldsymbol{z}_t(i) - \boldsymbol{z}_t(i+1)| \le \kappa, \forall i\in [n]$} for a fixed $\kappa \in [0,k-1]$. 

\begin{definition}[HS Graph]
The graph $G_{k,\kappa,n}$ is a DAG that~contains:
\begin{trivlist}
\item[$(i)$] \mbox{$N:=2+kn$} vertices arranged into $n+2$ layers. Layer~$0$ and Layer $(n+1)$, each contains only one vertex, respectively labeled $s$--the source vertex and $d$--the destination vertex. Each Layer \mbox{$i \in \{1,\ldots, n\}$} contains $k$ vertices whose labels are ordered from left to right by $(i,1), (i,2),\ldots, (i,k)$. %
\item[$(ii)$] There are directed edges from vertex $s$ to every vertex in Layer $1$ and edges from every vertex in Layer $n$ to vertex $d$. For $i \in \{1,2,\ldots, n-1\}$, there exists an edge connecting vertex $(i,j_1)$ to vertex $(i+1,j_2)$ if $|j_1 - j_2| \le \kappa$.
\end{trivlist}
\end{definition}
The graph $G_{k,\kappa,n}$ has {$E\!=\!2k\!+\!(n\!-\!1)\left[k\!+\!\kappa(2k\!-\!\kappa\!-\!1)\right]\! =\! \mO(nk^2)$} edges and at least $\Omega(\kappa ^{n-1})$ paths from $s$ to $d$. The edges ending at vertex $d$ are the auxiliary edges that are added just to guarantee that all paths end at $d$; these edges do not represent any intuitive quantity related to the game. For the remaining edges, any edge that ends at the vertex $(i,j)$ represents choosing the location $j$ as the $i$-th move. In other words, a path starting from $s$, passing by vertices \mbox{$(1,j_1), (2,j_2), \ldots, (n,j_n)$} and ending at $d$ represents the $n$-search that chooses location $j_1$, then moves to location $j_2$, then moves to location $j_3$, and so~on. 
\begin{proposition}
\label{Propolayer2}
Given $k, \kappa$ and $n$, there is a one-to-one mapping between the action set $S_{k,\kappa,n}$ of the learner in the HS game (with $n$-search among $k$ locations and coherence constraints with parameter $\kappa$) and the set of all paths from vertex $s$ to vertex $d$ of the graph $G_{k,\kappa,n}$.
\end{proposition}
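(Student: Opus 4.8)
The plan is to exhibit an explicit bijection $\phi$ between the path set $\pset$ of $G_{k,\kappa,n}$ and the action set $S_{k,\kappa,n}$, in the same spirit as the proof of Proposition~\ref{Propolayer}. First I would record the structural fact that $G_{k,\kappa,n}$ is layered: every edge goes from Layer $i$ to Layer $i+1$ for some $i \in \{0,1,\ldots,n\}$ (the fan-out from $s$ is the case $i=0$, the edges into $d$ the case $i=n$). Consequently any path $\pa \in \pset$ from $s$ in Layer $0$ to $d$ in Layer $n+1$ has length exactly $n+1$ and meets each intermediate Layer $i$, $i \in \{1,\ldots,n\}$, in exactly one vertex; write that vertex as $(i, j_i^{\pa})$ with $j_i^{\pa} \in [k]$, and define $\phi(\pa) := (j_1^{\pa}, \ldots, j_n^{\pa}) \in [k]^n$.

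Next I would check that $\phi$ maps into $S_{k,\kappa,n}$ and is injective. For well-definedness, note that for each $i \in \{1,\ldots,n-1\}$ the path $\pa$ uses the edge joining $(i, j_i^{\pa})$ to $(i+1, j_{i+1}^{\pa})$; by the construction of $G_{k,\kappa,n}$ such an edge exists only when $|j_i^{\pa} - j_{i+1}^{\pa}| \le \kappa$, so $\phi(\pa)$ satisfies the coherence constraint and hence lies in $S_{k,\kappa,n}$. For injectivity, observe that between any ordered pair of vertices of $G_{k,\kappa,n}$ there is at most one edge, so a path is completely determined by the ordered sequence of vertices it visits; if $\phi(\pa) = \phi(\pa')$ then $\pa$ and $\pa'$ visit the same intermediate vertices $(i,j_i)$ and are both forced through $s$ and $d$, hence coincide.

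It remains to prove surjectivity, which also identifies $\phi^{-1}$. Given any $\boldsymbol{z} \in S_{k,\kappa,n}$, consider the vertex sequence $s, (1,\boldsymbol{z}(1)), (2,\boldsymbol{z}(2)), \ldots, (n,\boldsymbol{z}(n)), d$. The edge $s \to (1,\boldsymbol{z}(1))$ exists because $G_{k,\kappa,n}$ contains an edge from $s$ to every vertex of Layer $1$; the edge $(n,\boldsymbol{z}(n)) \to d$ exists because it contains an edge from every vertex of Layer $n$ to $d$; and for $i \in \{1,\ldots,n-1\}$ the edge $(i,\boldsymbol{z}(i)) \to (i+1,\boldsymbol{z}(i+1))$ exists precisely because $|\boldsymbol{z}(i) - \boldsymbol{z}(i+1)| \le \kappa$, which holds since $\boldsymbol{z} \in S_{k,\kappa,n}$. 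Hence this sequence is a genuine $s$–$d$ path $\pa_{\boldsymbol{z}} \in \pset$ with $\phi(\pa_{\boldsymbol{z}}) = \boldsymbol{z}$, so $\phi$ is a bijection. Moreover it coincides with the map described informally before the proposition, since the edge of $\pa_{\boldsymbol{z}}$ ending at $(i,\boldsymbol{z}(i))$ is exactly the one representing ``choose location $\boldsymbol{z}(i)$ as the $i$-th move'', while the final edge into $d$ is the auxiliary $0$-loss edge carrying no combinatorial content.

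I do not expect a genuine obstacle here: the argument is essentially the bookkeeping already carried out for the CB graph in Proposition~\ref{Propolayer}. The only points deserving an explicit line are the layered-structure claim (legitimizing ``one vertex per layer'') and the remark that the auxiliary edges into $d$ and the complete bipartite fan-out from $s$ introduce no spurious or missing choices — the fan-out from $s$ reflects that the first move $\boldsymbol{z}(1)$ is unconstrained in $[k]$, and the single edge out of each Layer-$n$ vertex reflects that the search terminates after the $n$-th move.
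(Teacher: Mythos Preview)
Your argument is correct and matches the paper's approach: the paper does not write out a formal proof of Proposition~\ref{Propolayer2} at all, treating it as immediate from the layered description of $G_{k,\kappa,n}$ and the sentence preceding the proposition (a path through $(1,j_1),\ldots,(n,j_n)$ corresponds to the $n$-search $(j_1,\ldots,j_n)$), exactly as for Proposition~\ref{Propolayer}. Your write-up simply makes that correspondence explicit and checks both directions carefully.
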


\section{\Algo/ Algorithm and OSMD Algorithm in the CB and HS Games}
\label{prooflogarithm}
$(i)$ As stated in Section~\ref{sec:games}, the observation graphs in the CB games are non-symmetric and they satisfy assumption $(A0)$. If we choose $\beta = \beta^*$ as in~\eqref{eq:beta^*2}, then $\beta$ satisfies~\eqref{{eq:beta2}}. Moreover, \mbox{$\beta = \mO(1/ \sqrt{TnE})$}; thus, \mbox{$M = \mO(E^2 \sqrt{TnE})$}. From~\eqref{RTCB}, the expected regret of \Algo/ in this case is bounded by \mbox{$\mO\sqrt{Tn (\alpha_{CB}) \ln{M} \ln(P)}$} (recall that \mbox{$\alpha_{CB}= kn$} is an upper bound of independence numbers of the observation graphs in the CB games). Therefore, to guarantee that this bound is better than the bound of the \textsc{OSMD} algorithm (that is $\sqrt{2TnE}$), the following inequality needs to hold:
\begin{align*}
                & \mO\left( \alpha_{CB} \cdot \ln{M} \ln(P) \right) \le E \\
\Rightarrow    & \mO\left(nk \cdot \ln{(E^2 \sqrt{TnE})} \ln(2^n) \right) \le n k^2 \\
\Rightarrow     & \mO\left( \ln{(E^2 \sqrt{TnE})} \ln(2^n) \right) \le k \\ 
\Rightarrow     & \mO\left(n \ln{(n^3 k^5 \sqrt{T}}) \right) \le k.
\end{align*}

$(ii)$  As stated in Section~\ref{sec:games}, the observation graphs in the HS games with condition $(C1)$ are symmetric and do not satisfy assumption $(A0)$. If we choose \mbox{$\beta = 1/\sqrt{n \alpha T}$} then by~\eqref{RTHS1}, we have that $R_T$ is bounded by $\mO\left(n \sqrt{\alpha_{HS} T \ln(P)} \right)$ (recall that $\alpha_{HS}= k$ is an upper bound of the independence numbers of the observation graphs in the HS games). Therefore, to guarantee that this bound is better than the bound of the \textsc{OSMD} algorithm in HS games, the following inequality needs to hold:
\begin{align*}
                & \mO\left( \alpha_{HS} \cdot n  \ln(P) \right) \le E \\
  \Rightarrow   & \mO\left( k \cdot n  \ln(P) \right) \le n k^2 \\
\Rightarrow     & \mO\left( \ln(P)  \right)\le k \\ 
\Rightarrow     & \mO\left(n \ln{\kappa}   \right) \le k.
\end{align*}

$(iii)$ Finally, the observation graphs in the HS games with condition $(C2)$ are non-symmetric and do not satisfy assumption $(A0)$. Therefore, if we choose $\beta = \beta^*$ as in~\eqref{eq:beta^*}, then $\beta$ satisfies~\eqref{eq:tune_beta1}. In this case, $\beta = \mO(1/\sqrt{TnE})$ and $M = \mO(E^2 \sqrt{TnE})$. Therefore, from~\eqref{eq:RTBOU}, in this case, $R_T$ is bounded by $\mO(n\sqrt{T \alpha_{HS} \ln{\alpha_{HS}} \ln({nM})}$. Therefore, to guarantee that this bound is better than the bound of \textsc{OSMD} (that is, $\sqrt{2TnE}$), the following inequality needs to~hold:
\begin{align*}
                & \mO\left( \alpha_{HS} \cdot n \ln{nM} \ln(P) \right) \le E \\
\Rightarrow     & \mO\left(  n k \ln{(\kappa^n) \ln(nE^2\sqrt{TnE})} \right) \le n k^2 \\ 
\Rightarrow     & \mO\left( n \ln{\kappa} \ln{(n^4 k^5 \sqrt{T}}) \right) \le k.
\end{align*}

\end{document}